\documentclass[a4paper,UKenglish,cleveref, autoref, thm-restate]{lipics-v2021}



\bibliographystyle{plainurl}
\nolinenumbers
\title{(Independent) Roman Domination Parameterized by Distance to Cluster} 

\authorrunning{ }


   \author{Pradeesha Ashok }{International Institute of Information Technology Bangalore, India}{International Institute of Information Technology Bangalore, India}{https://orcid.org/0000-0002-1825-0097}{}

    \author{Gautam K. Das}{Department of Mathematics, Indian Institute of Technology Guwahati, India}{gkd@iitg.ac.in}{}{}
    
    \author{Arti Pandey}{Department of Mathematics, Indian Institute of Technology Ropar, India}{arti@iitrpr.ac.in}{}{}
    
    \author{Kaustav Paul}{Department of Mathematics, Indian Institute of Technology Ropar, India}{kaustav.20maz0010@iitrpr.ac.in}{}{}

    \author{Subhabrata Paul}{Department of Mathematics, Indian Institute of Technology Patna, India}{subhabrata@iitp.ac.in}{}{}





\keywords{Roman Domination, Independent Roman Domination, FPT, Distance to Cluster, Kernel}

\Copyright{ }
\ccsdesc[100]{{Design and Analysis of Algorithms}}

\begin{document}

\maketitle

\begin{abstract}
Given a graph $G=(V,E)$, a function $f:V\to \{0,1,2\}$ is said to be a \emph{Roman Dominating function} (RDF) if for every $v\in V$ with $f(v)=0$, there exists a vertex $u\in N(v)$ such that $f(u)=2$. A Roman Dominating function $f$ is said to be an \emph{Independent Roman Dominating function} (IRDF), if $V_1\cup V_2$ forms an independent set, where $V_i=\{v\in V~\vert~f(v)=i\}$, for $i\in \{0,1,2\}$. The total weight of $f$ is equal to $\sum_{v\in V} f(v)$, and is denoted as $w(f)$. The \emph{Roman Domination Number} (resp. \emph{Independent Roman Domination Number}) of $G$, denoted by $\gamma_R(G)$ (resp. $i_R(G)$), is defined as min$\{w(f)~\vert~f$ is an RDF (resp. IRDF) of $G\}$. For a given graph $G$, the problem of computing $\gamma_R(G)$ (resp. $i_R(G)$) is defined as the \emph{Roman Domination problem} (resp. \emph{Independent Roman Domination problem}).

In this paper, we examine structural parameterizations of the (Independent) Roman Domination problem. We propose fixed-parameter tractable (FPT) algorithms for the (Independent) Roman Domination problem in graphs that are $k$ vertices away from a cluster graph. These graphs have a set of $k$ vertices whose removal results in a cluster graph. We refer to $k$ as the distance to the cluster graph. Specifically, we prove the following results when parameterized by the deletion distance $k$ to cluster graphs: we can find the Roman Domination Number (and Independent Roman Domination Number) in time $4^kn^{O(1)}$. In terms of lower bounds, we show that the Roman Domination number can not be computed in time $2^{\epsilon k}n^{O(1)}$, for any $0<\epsilon <1$ unless a well-known conjecture, SETH fails. In addition, we also show that the  Roman Domination problem parameterized by distance to cluster, does not admit a polynomial kernel unless NP $\subseteq$ coNP$/$poly.

\end{abstract}

\section{Introduction}
The concept of Roman Dominating function originated in an article by Ian Stewart, titled ``Defend the Roman Empire!$"$ \cite{stewart1999defend}, published in Scientific American. Given a graph, where every vertex corresponds to a distinct geographical region within the historical narrative of the Roman Empire, the characterization of a location as secured or unsecured is delineated by the Roman Dominating function, denoted as $f$.

Specifically, a vertex $v$ is said to be unsecured if it lacks stationed legions, expressed as $f(v) = 0$. Conversely, a secured location is one where one or two legions are stationed, denoted by $f(v) \in \{1, 2\}$. The strategic methodology for securing an unsecured area involves the deployment of a legion from a neighboring location.

In the fourth century A.D., Emperor Constantine the Great enacted an edict precluding the transfer of a legion from a fortified position to an unfortified one if such an action would result in leaving the latter unsecured. Therefore, it is necessary to first have two legions at a given location ($f(v) = 2$) before sending one legion to a neighbouring location. This strategic approach, pioneered by Emperor Constantine the Great, effectively fortified the Roman Empire. Considering the substantial costs associated with legion deployment in specific areas, the Emperor aimed to strategically minimize the number of legions required to safeguard the Roman Empire.

The notion of Roman Domination in graphs was first introduced in an article by Ian Stewart \cite{stewart1999defend}. Given a graph $G=(V,E)$, a \emph{Roman Dominating function} (RDF) is defined as a function  $f: V \to \{0,1,2\}$, where every vertex $v$, for which $f(v)=0$ must be adjacent to at least one vertex $u$ with $f(u)=2$. The weight of an RDF is defined as $w(f)=\sum_{v\in V} f(v)$. The \emph{Roman Domination Number} is defined as $\gamma_R(G)= \min\{w(f)~\vert~f$ is an RDF of $G\}$. While the context is clear, if $f(v)=i$ for some RDF $f$, then we say that $v$ has label $i$.

 Given a graph $G=(V,E)$, a set $S\subseteq V$ is defined as \emph{independent set} if any two vertices of $S$ are non-adjacent. A function $f$ is referred to as an \emph{Independent Roman Dominating function} (\emph{IRDF}) if $f$ is an RDF and $V_1\cup V_2$ is an independent set.  The \emph{Independent Roman Domination Number} is defined as $i_R(G)= \min\{w(f)~\vert~f$ is an IRDF of $G\}$. An IRDF $f$ of $G$ with $w(f)=i_R(G)$ is denoted as an $i_R(G)$-function of $G$. Given a graph $G=(V,E)$, the problem of computing $i_R(G)$ is known as \emph{Independent Roman Domination problem}.
 
 One of the objectives of parameterized complexity is to identify parameters that render NP-hard problems fixed-parameter tractable (FPT). This is of practical significance because there are often small parameters, aside from solution size, that capture important practical inputs. Hence, it only makes sense to explore problems under a multitude of parameters. There has recently been a lot of research in this area. A key research direction involves identifying a parameter as small as possible, under which a problem becomes fixed-parameter tractable or admits a polynomial-sized kernel. Structural parameterization involves a parameter that is a function of the input structure rather than the standard output size. A recent trend in structural parameterization is to study problems parameterized by the deletion distance to various graph classes where the problem is efficiently solvable.

 Our parameter of interest is the `distance' of the graph from a natural class of graphs. Here, `distance' refers to the number of vertices that must be deleted from the graph to belong to the specified class. This article focuses on one such special class of graphs: cluster graphs, where each connected component of the graph is a clique. Note that both Roman Domination and Independent Roman Domination problems can be easily solved in cluster graphs. Given a graph $G=(V,E)$, the minimum number of vertices that need to be deleted from the graph so that the remaining graph becomes a cluster graph is called \emph{distance to cluster} of $G$ (denoted by CVD size of $G$).

\section{Preliminaries}

 \subsection{ Graph Theoretic Notations}
This paper only considers simple, undirected, finite and nontrivial graphs. Let $G=(V,E)$ be a graph. $n$ and $m$ will be used to denote the cardinalities of $V$ and $E$, respectively. $N(v)$ stands for the set of neighbors of a vertex $v$ in $V$. The number of neighbors of a vertex $v\in V$ defines its \emph{degree}, which is represented by the symbol $deg(v)$. The maximum degree of the graph will be denoted by $\Delta$. For a set $U\subseteq V$, the notation $deg_{U}(v)$ is used to represent the number of neighbors that a vertex $v$ has within the subset $U$. Additionally, we use $N_{U}(v)$ to refer to the set of neighbors of vertex $v$ within $U$. Given a set $S\subseteq V$, $G\setminus S$ is defined as the graph induced on $V\setminus S$, that is $G[V\setminus S]$.

A vertex of degree one is known as a \emph{pendant vertex}.  A set $S\subseteq V$ is said to be a \emph{dominating set} if every vertex of $V\setminus S$ is adjacent to some vertex of $S$. A graph 
$G$ is said to be a \emph{complete graph} if any two vertices of $G$ are adjacent. A set $S\subseteq V$ is said to be a clique if the subgraph of $G$ induced on $S$ is a complete graph. A graph is said to be a \emph{cluster graph} if every component of the graph is a clique. For every positive integer $n$,  $[n]$ denotes the set $\{1,2,\ldots,n\}$.

Given a graph $G=(V,E)$ and a function $f:V\to \{0,1,2\}$, $f_H:V(H)\to \{0,1,2\}$ is defined to be the function $f$ restricted on $H$, where $H$ is an induced subgraph of $G$.

For a graph $G = (V, E)$ and a function $f: V \to \{0,1,2\}$; we define $V_i = \{v\in V~\vert~f(v)=i\}$ for $i\in \{0,1,2\}$. The partition  
 $(V_0,V_1,V_2)$ is said to be  \emph{ordered partition} of $V$ induced by $f$. Note that the function $f:V\to \{0,1,2\}$ and the ordered partition $(V_0,V_1,V_2)$ of $V$ have a one-to-one correspondence. So, when the context is clear, we write $f = (V_0, V_1, V_2)$. Given an RDF $f = (V_0,V_1,V_2)$, $(V_1,V_2)$ is said to be \emph{Roman Dominating pair} corresponding to $f$. When the context is clear, we write Roman Dominating pair (omitting the notion of $f$).

\subsection{Problem Definitions}
Before presenting our results, we formalize the problems considered in the paper as follows.\\
\noindent\underline{\textsc{SET-COVER}}
\\
[-13pt]
\begin{enumerate}
  \item[] \textbf{Input}: An universe $U$ and a collection of subsets of $U$, $F=\{S_1,\ldots,S_m\}$ and a non-negative integer $k$. 
  \item[] \textbf{Parameter}: $\vert U\vert$.
  \item[] \textbf{Question}: Does there exists $k$ sets $S_{i_1}, \ldots, S_{i_k}$ in $F$, such that $\bigcup^k_{j=1} S_{i_j}=U$?
\end{enumerate}

\noindent\underline{\textsc{RD-CVD}}
\\
[-13pt]
\begin{enumerate}
  \item[] \textbf{Input}: A graph $G=(V,E)$, a cluster vertex deletion set $S$ and a non-negative integer $\ell$. 
  \item[] \textbf{Parameter}: $\vert S\vert=k$.
  \item[] \textbf{Question}: Does there exists an RDF $f$ on $G$, with weight at most $\ell$?
\end{enumerate}

\noindent\underline{\textsc{RD-VC}}
\\
[-13pt]
\begin{enumerate}
  \item[] \textbf{Input}: A graph $G=(V,E)$, a vertex cover $S$ and a non-negative integer $\ell$. 
  \item[] \textbf{Parameter}: $\vert S\vert=k$.
  \item[] \textbf{Question}: Does there exists an RDF $f$ on $G$, with weight at most $\ell$?
\end{enumerate}

\noindent\underline{\textsc{IRD-CVD}}
\\
[-13pt]
\begin{enumerate}
  \item[] \textbf{Input}: A graph $G=(V,E)$, a cluster vertex deletion set $S$ and a non-negative integer $\ell$. 
  \item[] \textbf{Parameter}: $\vert S\vert=k$.
  \item[] \textbf{Question}: Does there exists an IRDF $f$ on $G$, with weight at most $\ell$?
\end{enumerate}

\noindent\underline{\textsc{$d$-hitting set}}
\\
[-13pt]
\begin{enumerate}
  \item[] \textbf{Input}:  An universe $U$ and a collection of subsets of $U$, $F=\{S_1,\ldots,S_m\}$ such that $\vert S_i\vert\leq d$, for all $i\in [m]$ and a non-negative integer $k$.
  \item[] \textbf{Parameter}: $\vert U\vert$.
  \item[] \textbf{Question}: Does there exists a subset $U'\subseteq U$, such that $U'\cap S_i\neq \emptyset$, for every $i\in [m]$?
\end{enumerate}

\subsection{Parameterized Complexity Notations and Defenitions}\label{sec:para_def}

 \begin{definition}
     (Fixed Parameter Tractability) Let $L \subseteq \Sigma^* \times \mathbb{N}$ be a parameterized language. $L$ is said to be fixed parameter tractable (or FPT) if there exists an algorithm $\mathcal{B}$, a constant $c$ and a computable function $f$ such that for all $x$, for all $k; \mathcal{B}$ on input $(x, k)$ runs in at most $f(k) \cdot|x|^c$ time and outputs $(x, k) \in L$ if and only if $\mathcal{B}([x, k])=1$. We call the algorithm $\mathcal{B}$ as fixed parameter tractable algorithm (or FPT algorithm).
 \end{definition}  

\begin{definition}
    (Parameterized Reduction) Let $P_1, P_2 \in \Sigma^* \times \mathbb{N}$ be two parameterized languages. Suppose there exists an algorithm $\mathcal{B}$ that takes input $(x, k)$ (an instance of $P_1$ ) and constructs an instance $\left(x^{\prime}, k^{\prime}\right)$ of $P_2$ such that the following conditions are satisfied:
    \begin{itemize}
        \item $(x, k)$ is a YES instance if and only if $\left(x^{\prime}, k^{\prime}\right)$ is a YES-Instance.
        \item $k^{\prime} \in f(k)$ for some function depending only on $k$.
        \item Algorithm $\mathcal{B}$ must run in $g(k)|x|^c$ time, where $g(.)$ is a computable function. 
    \end{itemize}
    Then we say that there exists a parameterized reduction from $P_1$ to $P_2$.
\end{definition}

 \noindent\textbf{W-hierarchy}: To capture the parameterized languages being FPT or not, the W-hierarchy is defined as $\mathrm{FPT} \subseteq \mathrm{W}[1] \subseteq \cdots \subseteq \mathrm{XP}$. It is believed that this subset relation is strict~\cite{DBLP:books/sp/CyganFKLMPPS15}. Hence, a parameterized language that is hard for some complexity class above FPT is unlikely to be FPT. Theorem \ref{th:W_hard} gives the use of parameterized reduction. If a parameterized language $L \subseteq \Sigma^* \times \mathbb{N}$ can be solved by an algorithm running in time $\mathcal{O}\left(n^{f(k)}\right)$, then we say $L \in$ XP. In such a situation, we also say that $L$ admits an XP algorithm.

\begin{definition}
    (para-NP-hardness) A parameterized language $L \subseteq \Sigma^* \times \mathbb{N}$ is called para-NP-hard if it is NP-hard for some constant value of the parameter.
\end{definition}

It is believed that a para-NP-hard problem does not admit an XP algorithm as; otherwise it will imply $\mathrm{P}=\mathrm{NP}$ \cite{DBLP:books/sp/CyganFKLMPPS15}.

\begin{theorem}\cite{DBLP:books/sp/CyganFKLMPPS15}\label{th:W_hard}
    Let there be a parameterized reduction from parameterized problem $P_1$ to parameterized problem $P_2$. Then if $P_2$ is FPT, then so is $P_1$. Equivalently, if $P_1$ is $W[i]$-hard for some $i \geq 1$, then so is $P_2$.
\end{theorem}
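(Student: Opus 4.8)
The plan is to prove the two implications separately, in each case simply unfolding the three guarantees that the parameterized reduction $\mathcal{B}$ from $P_1$ to $P_2$ provides: correctness (the produced instance is a YES-instance iff the input is), the computable bound $k' \leq f(k)$ on the new parameter, and the running-time bound $g(k)|x|^c$. No new combinatorial idea is needed; the whole proof is definition-chasing plus careful bookkeeping of the time and parameter bounds.

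For the first implication, suppose $P_2$ is FPT, witnessed by an algorithm $\mathcal{A}$ that decides membership of $(x',k')$ in time $h(k')\cdot|x'|^d$. I would build an algorithm for $P_1$ that, on input $(x,k)$, first runs $\mathcal{B}$ to produce $(x',k')$ and then runs $\mathcal{A}$ on $(x',k')$, returning its verdict. Correctness is immediate from the equivalence of YES-instances. For the running time, the crucial observation is that since $\mathcal{B}$ runs in time $g(k)|x|^c$, the instance it outputs satisfies $|x'| \leq g(k)|x|^c$ as well as $k' \leq f(k)$. Hence the total time is at most $g(k)|x|^c + h(f(k))\cdot (g(k)|x|^c)^d$, which is bounded by $F(k)\cdot|x|^{C}$ with the computable function $F(k) = g(k) + h(f(k))g(k)^d$ and the constant $C = cd$. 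This places $P_1$ in FPT.

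For the second implication I would first record the (standard) fact that parameterized reductions compose: given reductions $P_1 \to P_2$ and $P_2 \to P_3$, running them in sequence is again a parameterized reduction $P_1 \to P_3$, where correctness chains through the two equivalences, the parameter bound composes as $f_2 \circ f_1$, and the running-time estimate mirrors the one above. With transitivity established, $W[i]$-hardness of $P_2$ follows directly: by definition every language $L \in W[i]$ admits a parameterized reduction to $P_1$, and composing that reduction with $\mathcal{B}$ yields a parameterized reduction $L \to P_2$, so $P_2$ is $W[i]$-hard as well.

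I expect the only real obstacle to be the running-time bookkeeping in both parts: one must verify that the intermediate instance size stays polynomially bounded in $|x|$ (so that applying a second FPT-type algorithm keeps the total time of the form $F(k)|x|^C$), and that every function of $k$ that arises—$f$, $g$, $h$, and their substitutions and compositions—remains computable. Once these bounds are checked, both implications are immediate consequences of the definitions of parameterized reduction and fixed-parameter tractability.
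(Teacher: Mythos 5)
This theorem is imported from Cygan et al.\ without proof in the paper, so there is nothing to compare against; your argument is the standard textbook proof (compose the reduction with the FPT algorithm for $P_2$, using $|x'|\leq g(k)|x|^c$ and $k'\leq f(k)$ to bound the total running time by $F(k)|x|^{cd}$, and obtain the hardness transfer from transitivity of parameterized reductions plus the definition of $W[i]$-hardness). It is correct, modulo the routine convention that $f$, $g$, $h$ may be assumed non-decreasing so that $h(k')\leq h(f(k))$.
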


\begin{definition}
    (Kernelization)  Let $L \subseteq \Sigma^* \times \mathbb{N}$ be a parameterized language. Kernelization is an algorithm that replaces the input instance $(x, k)$ by a reduced instance $\left(x^{\prime}, k^{\prime}\right)$ such that

    \begin{itemize}
        \item $k^{\prime} \leq f(k),\left|x^{\prime}\right| \leq g(k)$ for some functions $f, g$ depending only on $k$.
        \item $(x, k) \in L$ if and only if $\left(x^{\prime}, k^{\prime}\right) \in L$.
    \end{itemize}
    The reduction from $(x, k)$ to $\left(x^{\prime}, k^{\prime}\right)$ must be computable in $p(|x|+k)$ time, where $p(.)$ is a polynomial function. If $g(k)=k^{\mathcal{O}(1)}$ then we say that $L$ admits a polynomial kernel.
\end{definition}

It is well-known that a decidable parameterized problem is FPT if and only if it has a kernel. However, the kernel size could be exponential (or worse) in the parameter. There is a hardness theory for problems having polynomial sized kernel. Towards that, we define the notion of polynomial parameter transformation.

\begin{definition}
    (Polynomial parameter transformation (PPT)) Let $P_1$ and $P_2$ be two parameterized languages. We say that $P_1$ is polynomial parameter reducible to $P_2$ if there exists a polynomial time computable function (or algorithm) $f: \Sigma^* \times \mathbb{N} \rightarrow \Sigma^* \times \mathbb{N}$, a polynomial $p: \mathbb{N} \rightarrow \mathbb{N}$ such that $(x, k) \in P_1$ if and only if $f(x, k) \in P_2$ and $k^{\prime} \leq p(k)$ where $f(x, k)=\left(x^{\prime}, k^{\prime}\right)$. We call $f$ to be a polynomial parameter transformation from $P_1$ to $P_2$.
\end{definition}

The following theorem gives the use of the polynomial parameter transformation for obtaining kernels for one problem from another.

\begin{theorem}\label{th:kernel_hard}\cite{DBLP:journals/tcs/BodlaenderTY11}
    Let $P, Q \subseteq \Sigma^* \times \mathbb{N}$ be two parameterized problems and assume that there exists a PPT from $P$ to $Q$. Furthermore, assume that the classical version of $P$ is NP-hard and $Q$ is in NP. Then, if $Q$ has a polynomial kernel, then $P$ has a polynomial kernel.
\end{theorem}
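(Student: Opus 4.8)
The plan is to establish the statement constructively: assuming a polynomial kernelization for $Q$, I will build a polynomial kernelization for $P$ by chaining the PPT, the kernel for $Q$, and a classical polynomial-time reduction whose existence is guaranteed precisely by the two complexity hypotheses. First I would fix the ingredients. Let $f$ be the PPT from $P$ to $Q$, so that for every instance $(x,k)$ of $P$ we obtain $f(x,k)=(x',k')$ in polynomial time with $(x,k)\in P \iff (x',k')\in Q$ and $k'\leq p(k)$ for some polynomial $p$. Let $\mathcal{K}$ be the assumed polynomial kernelization for $Q$, mapping $(x',k')$ to an equivalent instance $(x'',k'')$ with $|x''|+k''\leq q(k')$ for some polynomial $q$. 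Finally, since the classical version of $P$ is NP-hard and $Q\in$ NP, by definition of NP-hardness there is a polynomial-time many-one (Karp) reduction $R$ from $Q$ to $P$; applying it to the kernelized instance yields a classical instance $z=R(x'',k'')$ of $P$ with $|z|$ polynomial in $|x''|+k''$.

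The composition is then the candidate kernelization for $P$: on input $(x,k)$, compute $(x',k')=f(x,k)$, then $(x'',k'')=\mathcal{K}(x',k')$, then $z=R(x'',k'')$, and output the parameterized instance $(z,|z|)$. Each step runs in polynomial time, so the whole procedure does as well. Correctness follows from the chain of equivalences $(x,k)\in P \iff (x',k')\in Q \iff (x'',k'')\in Q \iff z\in P$. For the size bound, $|x''|+k''\leq q(k')\leq q(p(k))$ is polynomial in $k$, hence $|z|$ is polynomial in $k$; since the output parameter is taken to be $|z|$ itself, both the instance size and the parameter are polynomially bounded in $k$, which is exactly what a polynomial kernel for $P$ demands.

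The step I expect to be the crux is the use of the two complexity hypotheses to \emph{return} to the language $P$: the kernelization $\mathcal{K}$ only compresses an instance of $Q$, whereas a kernel for $P$ must itself be an instance of $P$. It is precisely the NP-hardness of $P$ together with $Q\in$ NP that supplies the Karp reduction $R\colon Q\to P$ needed to translate the compressed $Q$-instance back into a polynomially sized $P$-instance; dropping either assumption breaks this final transfer. A minor technical point to treat with care is the choice of output parameter---setting it to $|z|$ is legitimate and keeps it polynomial in $k$---and one should handle the degenerate cases where $R$ maps to a fixed constant-size YES/NO instance so that the parameterized output remains well defined.
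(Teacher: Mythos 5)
The paper does not prove this theorem; it is imported verbatim from Bodlaender, Thomass\'e and Yeo \cite{DBLP:journals/tcs/BodlaenderTY11}, so there is no in-paper argument to compare against. Your proof is the standard one and is essentially correct: compose the PPT with the kernelization for $Q$, then use NP-hardness of $P$ together with $Q\in\mathrm{NP}$ to obtain a Karp reduction carrying the compressed $Q$-instance back to a $P$-instance of size polynomial in $k$. You also correctly identify the crux, namely that both complexity hypotheses are needed solely to make the return trip from $Q$ to $P$.

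One technical point deserves more care than your parenthetical remark gives it. The Karp reduction $R$ produces an instance $z$ of the \emph{classical} (unparameterized) version of $P$, and you propose to output $(z,|z|)$. But membership of a pair in the parameterized language $P$ can genuinely depend on the parameter value (for RD-CVD, for instance, the parameter is the size of a specific vertex set given in the input, not an arbitrary integer one is free to choose), so $(z,|z|)\in P$ need not be equivalent to $z$ being a yes-instance of classical $P$. The standard fix is to define the classical version as $\tilde{P}=\{x\#1^{k} : (x,k)\in P\}$, let $R$ target $\tilde{P}$, and then decode the output $z=x\#1^{k''}$ into the pair $(x,k'')$; since $k''\leq|z|$ is polynomial in $k$, this yields a legitimate polynomial kernel. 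With that adjustment your argument is complete.
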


We use the following conjecture to prove one of our lower bounds.

\begin{conjecture}\cite{DBLP:journals/jcss/ImpagliazzoPZ01}\label{th:SETH}
    (Strong Exponential Time Hypothesis (SETH)) There is no $\epsilon>0$ such that $\forall q \geq 3, q$-CNFSAT can be solved in $(2-\epsilon)^n n^{\mathcal{O}(1)}$ time where $n$ is the number of variables in input formula.
\end{conjecture}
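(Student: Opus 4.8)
The plan must begin by confronting a fundamental difficulty: the statement to be established is the \emph{Strong Exponential Time Hypothesis}, which is a conjecture, not a theorem, and for which no proof is known or expected. Any genuine proof would have to exhibit a near-$2^n$ lower bound on the time needed to decide $q$-CNFSAT; since $q$-CNFSAT is $\mathrm{NP}$-complete for every $q\ge 3$, such a bound would immediately imply $\mathrm{P}\ne\mathrm{NP}$, and in fact much more. So the first thing I would record is that proving SETH subsumes the resolution of $\mathrm{P}$ versus $\mathrm{NP}$: SETH is strictly stronger than the Exponential Time Hypothesis, which alone already separates $\mathrm{P}$ from $\mathrm{NP}$.

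To even phrase a target, I would fix the Impagliazzo--Paturi formulation. Writing $s_q=\inf\{\delta : q\text{-CNFSAT}\in\mathrm{TIME}(2^{\delta n})\}$, SETH asserts $\lim_{q\to\infty}s_q=1$; equivalently, for every $\epsilon>0$ there is a width $q$ for which no $(2-\epsilon)^n\,n^{O(1)}$ algorithm decides $q$-CNFSAT. A proof attempt would therefore need, for each $\epsilon>0$, an explicit family of formulas defeating \emph{every} algorithm running in that time. This is precisely the point at which all known techniques stall: we possess no method for ruling out arbitrary, possibly very clever, algorithms on all inputs, only lower bounds in severely restricted models of computation.

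The main obstacle is thus not a single hard lemma but the collective weight of the standard barriers to unconditional lower bounds. A $(2-\epsilon)^n$ bound would have to evade \emph{relativization}, which blocks any argument treating the purported fast algorithm as a black box; \emph{algebrization}, which rules out the natural algebraic strengthenings of such arguments; and, on the combinatorial side, the \emph{natural proofs} barrier, which obstructs hardness-from-pseudorandomness approaches under standard cryptographic assumptions. Because a near-$2^n$ separation is far more demanding than the polynomial separations these barriers already frustrate, I do not expect any current paradigm to close the gap.

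Consequently, the honest proof proposal is a placement rather than a proof: SETH is adopted here as a \emph{hypothesis}, in the same spirit as the $\mathrm{NP}\not\subseteq\mathrm{coNP}/\mathrm{poly}$ assumption invoked elsewhere in this excerpt, precisely because it is believed to lie beyond reach. Its function is to act as the antecedent of a conditional lower bound, namely the $2^{\epsilon k}n^{O(1)}$ impossibility for Roman Domination parameterized by distance to cluster. Accordingly, the only constructive thing to do with the statement as worded is to \emph{use} it: the later development would give a parameterized reduction from $q$-CNFSAT to \textsc{RD-CVD} that transfers this conjectured hardness, rather than attempting to establish SETH itself.
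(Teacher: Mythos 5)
You are right that this statement is a conjecture, not a theorem: the paper offers no proof, citing it from Impagliazzo--Paturi--Zane exactly as an unproven hypothesis whose role is to serve as the antecedent of the conditional $2^{\epsilon k}n^{O(1)}$ lower bound, which matches your treatment. The only minor inaccuracy is your guess about its use downstream: the paper does not reduce from $q$-CNFSAT directly, but invokes the known equivalence (Theorem~\ref{th:LB1}) between SETH and lower bounds for $d$-HITTING SET parameterized by universe size, and then reduces $d$-HITTING SET to \textsc{RD-CVD}.
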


We have the following theorem, which gives an algorithm for SET-COVER parameterized by the size of the universe.

\begin{theorem}\cite{DBLP:series/txtcs/FominK10}
    The SET-COVER problem can be solved in $2^n(m+n)^{\mathcal{O}(1)}$ time where $n$ is the size of the universe and $m$ is the size of the family of subsets of the universe.
\end{theorem}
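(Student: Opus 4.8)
The plan is to solve \textsc{Set-Cover} by dynamic programming over all subsets of the universe $U$, identifying each subset with a bitmask over the $n$ elements. For every $W \subseteq U$ I would define $\mathrm{OPT}(W)$ to be the minimum number of sets from $F$ whose union contains $W$, with $\mathrm{OPT}(W) = \infty$ if no subcollection of $F$ covers $W$. The decision problem then reduces to a single table entry: the given instance is a YES-instance if and only if $\mathrm{OPT}(U) \le k$, so it suffices to compute $\mathrm{OPT}(U)$ and compare it with $k$.

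The heart of the argument is the recurrence
\[
\mathrm{OPT}(\emptyset) = 0, \qquad \mathrm{OPT}(W) = 1 + \min_{i\,:\,S_i \cap W \neq \emptyset} \mathrm{OPT}(W \setminus S_i) \quad \text{for } W \neq \emptyset .
\]
First I would establish its correctness by an optimal-substructure argument. In any optimal cover of $W$, fix one set $S_i$ that it uses; the remaining sets cover $W \setminus S_i$ with one fewer set, giving $\mathrm{OPT}(W) \ge 1 + \mathrm{OPT}(W \setminus S_i)$; conversely, prepending $S_i$ to an optimal cover of $W \setminus S_i$ yields a cover of $W$, giving the reverse inequality, so the minimum over admissible $i$ is exact. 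Restricting the minimization to sets with $S_i \cap W \neq \emptyset$ is harmless, because a set disjoint from $W$ leaves the subproblem unchanged and is therefore never forced in an optimal solution.

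For the running time I would fill the table by processing the $2^n$ subsets $W$ in nondecreasing order of cardinality. Every term $\mathrm{OPT}(W \setminus S_i)$ on the right-hand side indexes a set that is strictly smaller than $W$ (this is exactly where the restriction $S_i \cap W \neq \emptyset$ is used), so it has already been computed when $\mathrm{OPT}(W)$ is evaluated. Computing $W \setminus S_i$ is one bitwise operation and looking up its stored value is a table access, so each of the $2^n$ entries costs $O(m)$ difference-and-lookup steps, each of cost $\mathrm{poly}(n)$. The total work is therefore $2^n \cdot m \cdot \mathrm{poly}(n) = 2^n (m+n)^{O(1)}$.

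Since the DP itself is standard, the only point demanding care is making the dependency order explicit: confining the minimum to sets meeting $W$ simultaneously rules out the degenerate self-reference $W \setminus S_i = W$ and guarantees that subproblems strictly shrink, which is what makes the recurrence well-founded and lets each entry be computed exactly once within the claimed bound. The hard part, to the extent there is one, is thus the well-foundedness bookkeeping rather than the combinatorics.
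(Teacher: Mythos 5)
The paper does not prove this statement; it is quoted verbatim from Fomin and Kratsch's textbook, where the proof is exactly the subset dynamic programming you describe. Your recurrence, the optimal-substructure justification, and the $2^n\cdot m\cdot \mathrm{poly}(n)$ accounting are all correct, and the care you take with the restriction $S_i\cap W\neq\emptyset$ (which is what makes the recurrence well-founded) is the right place to put the emphasis, so there is nothing to add.
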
 

 \subsection{Related Works}
From the parameterized complexity point of view, it is surprising that there does not exist much literature on the Roman Domination problem (except \cite{DBLP:conf/walcom/MohanapriyaRS23,DBLP:journals/ijcm/Fernau08}), while the classical dominating set problem is very well studied. Some related literature about the parameterized complexity of the domination problem can be found in \cite{DBLP:journals/jacm/AlberFN04,DBLP:journals/tcs/DowneyF95,DBLP:journals/talg/PhilipRS12}. One recent work about the domination problem parameterized by several structural parameters like distance to cluster, distance to split, can be found in \cite{DBLP:journals/corr/abs-2405-10556}. The techniques we designed in this paper, are adaption of the technique used in \cite{DBLP:journals/corr/abs-2405-10556}, with appropriate modification to fit our problem. 

 In \cite{DBLP:journals/ijcm/Fernau08}, Fernau proved that the Roman Domination parameterized by the solution size is $W[2]$-hard in general graphs, but FPT for planar graphs. He also showed that the same problem parameterized by treewidth is FPT. In \cite{DBLP:conf/walcom/MohanapriyaRS23}, Mohannapriya et al. showed that a more generalized problem, that is $k$-Roman Domination problem parameterized by solution size is $W[1]$-hard, even for split graphs. To the best of our knowledge, no other parameterized complexity results exist for the Roman Domination problem.

 \subsection{Our Results}
The main contribution of the paper is following:
\begin{itemize}

 \item In Section \ref{sec: roman_dom} (resp. Section \ref{sec:ind_rom}), we show that the RD-CVD (resp. IRD-CVD) problem is FPT. 
 
 \item In Section \ref{sec:LB}, we show that the RD-CVD problem cannot be solved in time $2^{\epsilon k} n^{O(1)}$ (where $0<\epsilon<1$) unless SETH fails, neither it admits a polynomial kernel unless NP $\subseteq$ coNP$/$poly. 
 
 \item In Section \ref{sec:conclusion}, we conclude the paper with some future research directions.
\end{itemize}

\section{Variants of Roman Domination parameterized by CVD size}

In this section, we assume that a cluster vertex deletion set $S$ of size $k$ is given with the input graph $G=(V,E)$. If not, the algorithm mentioned in \cite{DBLP:journals/mst/BoralCKP16} can be used, which runs in $1.92^kn^{O(1)}$ time and outputs a cluster vertex deletion set of size at most $k$ or concludes that there does not any cluster vertex deletion set of size at most $k$.

\subsection{Roman Domination}\label{sec: roman_dom}
In this section, we propose an FPT algorithm for the Roman Domination problem when the parameter is  CVD size. 

We consider a CVD set $S$ as a part of the input, where $\vert S\vert=k$. Our algorithm starts with making a guess for $S_1=V_1\cap S$ and $S_2=V_2\cap S$, where $(V_1, V_2)$ is an optimal Roman Dominating pair. At first, a guess of $S_2$ is made from $S$. Then, the vertices of $N[S_2]\cap S$ are deleted from the graph. Then we guess $S_1$ from the remaining $S$ and delete $S_1$ from $S$. 

Note that $S$ is a CVD set, hence $G\setminus S$ is disjoint union of cliques. Let $G\setminus S= C_1\cup C_2\cup\ldots C_q$, where every $C_i$ is a clique and $\vert C_i\vert=\ell_i$, for $i\in [q]$, Note that $q\leq n-k$. After the selection of $S_1$ and $S_2$, every clique belongs to exactly one of the following three types:

Type $0$ ($T_0$) cliques: $C_i$ is a $T_0$ clique if every vertex of $C_i$ is adjacent to at least one vertex of $S_2$.

Type $1$ ($T_1$) cliques: $C_i$ is a $T_1$ clique if exactly one vertex of $C_i$ is not adjacent to any vertex of $S_2$.

Type $2$ ($T_2$) cliques: $C_i$ is a $T_2$ clique if $C_i$  contains at least two vertices which are not adjacent to any vertex of $S_2$.

We define an order $\rho_i$ on the vertices of the clique $C_i$ as follows: if $C_i$ is a $T_0$ or $T_2$ clique, then we order them arbitrarily; if $C_i$ is a $T_1$ clique, then the set $C_i\setminus N[S_2]$ contains exactly one vertex. We make that vertex the first vertex of $\rho_i$ and order the rest of the vertices of $C_i$ arbitrarily. Now we define an order $\rho$ on the vertex set of $G\setminus S$ as follows: $\rho=v_1,v_2,\ldots,v_{\vert G\setminus S\vert}$, where first $\ell_1$ vertices of $\rho$ are vertices of $C_1$ and follows the order $\rho_1$, then the next  $\ell_2$ vertices of $\rho$ are vertices of $C_2$ and follows the order $\rho_2$ and so on. Now, here comes an observation.

\begin{observation}
    Given a $T_2$ clique $C_i$, any Roman Dominating pair $(V_1,V_2)$ extended from $(S_1,S_2)$ has one of the following properties:
    \begin{enumerate}
        \item $C_i\cap V_2\neq \emptyset$.
        \item A Roman Dominating pair $(V'_1,V'_2)$ can be  extended from $(S_1,S_2)$, which has same or less weight than $(V_1,V_2)$ and $V'_2\cap C_i\neq \emptyset$
    \end{enumerate}
\end{observation}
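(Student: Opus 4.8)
The plan is to prove the dichotomy directly by assuming the first property fails, that is $C_i \cap V_2 = \emptyset$, and then constructing from the given pair $(V_1,V_2)$ a new pair $(V'_1,V'_2)$ of no greater weight that places a label-$2$ vertex inside $C_i$, thereby establishing the second property. The guiding principle is that every modification will be confined to the clique $C_i$, so the restriction of the labelling to $S$ is never touched; this means any $(V'_1,V'_2)$ I build is automatically still extended from $(S_1,S_2)$, and I would note this at the outset so it need not be re-verified.

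First I would invoke the defining property of a $T_2$ clique to fix two distinct vertices $x,y \in C_i$ with $N(x)\cap S_2 = N(y)\cap S_2 = \emptyset$. The key structural step is to pin down their labels under $f$. Since $C_i \cap V_2 = \emptyset$, neither $x$ nor $y$ carries label $2$. Moreover, if either had label $0$ it would require a label-$2$ neighbour; but its neighbours lie only in $C_i$ (which has no label-$2$ vertex) or in $S$, and because $V_2 \cap S = S_2$ while $x,y$ are non-adjacent to $S_2$, no such neighbour can exist. Hence both $x$ and $y$ must have label $1$ under $f$.

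Then I would perform the local swap: let $f'$ agree with $f$ everywhere except that $f'(x)=2$ and $f'(y)=0$. The associated pair satisfies $V'_2 \cap C_i = \{x\} \neq \emptyset$, exactly as required, and the weight changes by $-1$ at $y$ and $+1$ at $x$, so $w(f') = w(f)$ and nothing is lost. It remains to confirm $f'$ is a genuine RDF. The only vertex newly assigned label $0$ is $y$, and since $C_i$ is a clique, $y$ is adjacent to $x$, which now has label $2$; thus $y$ is dominated. Every other label-$0$ vertex retains whatever label-$2$ neighbour dominated it under $f$, because raising $x$ from $1$ to $2$ can only add domination and removing $y$ from $V_1$ affects no domination (a label-$1$ vertex dominates nothing). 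So $f'$ is a valid RDF of equal weight with a label-$2$ vertex in $C_i$.

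I do not anticipate a serious obstacle; the only delicate point is the label-forcing argument ruling out label $0$ for $x$ and $y$, which rests entirely on the two facts $N(x)\cap S_2 = N(y)\cap S_2 = \emptyset$ and $V_2 \cap S = S_2$. I would therefore state these explicitly before drawing the conclusion, since they are precisely what guarantees that a label-$0$ assignment to $x$ or $y$ would be undominated when $C_i$ itself contains no label-$2$ vertex.
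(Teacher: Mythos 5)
Your proposal is correct and follows essentially the same argument as the paper: both identify two vertices of $C_i$ with no neighbour in $S_2$, observe that (absent a label-$2$ vertex in $C_i$) neither can be labelled $0$, and then perform the identical weight-preserving swap that promotes one to label $2$ and demotes the other to label $0$. The only difference is organizational — you assume $C_i\cap V_2=\emptyset$ up front and derive that both vertices must have label $1$, whereas the paper splits into cases and folds the label-$0$ situation into the first alternative of the dichotomy — but the mathematical content is the same.
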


\begin{proof}
    As $C_i$ is a $T_2$ clique, let us assume that $v_1$ and $v_2$ are two vertices of $C_i$ which are not adjacent to $S_2$. If either of $v_1, v_2$ belongs to $V_2$, then we are done. If not, then two cases may arise:
    
    \noindent \textbf{Case 1}: At least one vertex among $v_1,v_2$ has label $0$. Without loss of generality, let $v_1$ has label $0$. Then, one neighbour $v$ of $v_1$ must have label $2$. But $v$ can not belong to $S$ as $V_2\cap S=S_2$ and $v_1$ has no neighbour in $S_2$. Hence, $v$ must belong to $C_i$, which implies that $C_i\cap V_2\neq \emptyset$.

    \noindent \textbf{Case 2}: None of $v_1,v_2$ has label $0$, which implies that both have label $1$. Now we construct a dominating pair $(V'_1,V'_2)$ as follows: $V'_2=V_2\cup\{v_1\}$ and $V'_1=V_1\setminus\{v_1,v_2\}$. Note that $(V'_1,V'_2)$ is a Roman Dominating pair and $(V'_1,V'_2)$ has the same weight as $(V_1,V_2)$ and $V'_2\cap C_i\neq \emptyset$. Hence, the result follows.      
\end{proof}

From the above observation, we can rephrase the remaining problem as follows:

\medskip
\noindent\underline{\textsc{RD-DisjointCluster} problem}
\\
[-13pt]
\begin{enumerate}
  \item[] \textbf{Input}: A graph $G=(V,E)$, a subset $S\subseteq V$ such that every component of $G\setminus S$ is a clique, a $(0,1,2)$-flag vector $f=(f_1,f_2,\ldots,f_q)$ corresponding to the cliques $(C_1,C_2,\ldots,C_q)$ and $\ell\in \mathbb{Z^+}$.
  \item[] \textbf{Parameter}: $\vert S\vert$.
  \item[] \textbf{Question}: Does there exists a subset $T\subseteq G\setminus S$ which satisfies all of the following conditions?
  \begin{enumerate}
      \item For every $C_i$ with $f_i=2$, $T\cap C_i\neq \emptyset$.
      \item $2\vert T\vert + g(T)\leq \ell$, where $g(T)=$ number of cliques with flag $1$, which have empty intersection with $T$.
  \end{enumerate}
\end{enumerate}

For an instance $(G,S,\ell)$ of the RD-CVD problem, with cliques $C_1,\ldots,C_q$ and the guesses of $S_1,S_2$, we build an instance $(\hat{G}, \hat{S}, f,\hat{\ell})$ of the \textsc{RD-DisjointCluster} problem as follows:

\begin{itemize}
    \item $\hat{G}=G\setminus ((N[S_2]\cap S)\cup  S_1$).
    \item $\hat{S}=S\setminus ((N[S_2]\cap S)\cup  S_1)$.
    \item $\hat{\ell}=\ell-2\vert S_2\vert-\vert S_1\vert$.
    \item For all $i\in [q]$, $f_i=j$, if $C_i$ is a $T_j$ clique, $j\in\{0,1,2\}$.
\end{itemize}

\noindent \textbf{Formulation of the problem as a variant of set cover}: We define a variant of the set cover problem. Given an instance of the \textsc{RD-DisjointCluster} problem, we construct an instance of the set cover problem. Let $(G,S,f,\ell)$  be an instance of the \textsc{RD-DisjointCluster} problem and $\rho=(v_1,v_2,\ldots,v_{\vert G\setminus S\vert})$ be an ordering of the vertex set of $G\setminus S$, as defined earlier. We take the universe $U=S$ and $F=\{S_1,S_2,\ldots,S_{\vert G\setminus S\vert}\}$, where $S_i=N(v_i)\cap S$ for every $i\in [\vert G\setminus S\vert]$. Now, we modify the usual SET-COVER problem to suit our problem. The modified SET-COVER problem is defined below:

\medskip
\noindent\underline{\textsc{SET-CoverWithPartition} problem (SCP)}
\\
[-13pt]
\begin{enumerate}
  \item[] \textbf{Input}: Universe $U$, a family of sets $F=\{S_1,S_2,\ldots,S_m\}$, a partition of $\beta=(\beta_1,\ldots,\beta_q)$ of $F$, a $(0,1,2)$ flag vector $f=(f_1,f_2,\ldots,f_q)$ corresponding to each block in the partition $\beta$ and a non-negative integer $\ell$.
  \item[] \textbf{Parameter}: $\vert U\vert$.
  \item[] \textbf{Question}: Does there exists a subset $F'\subseteq F$ which satisfies all of the following conditions?
  \begin{enumerate}
      \item For every $\beta_i$ with $f_i=2$, $F'\cap \beta_i\neq \emptyset$.
      \item $2\vert F'\vert + g(F')\leq \ell$, where $g(A)=$ number of blocks with flag $1$, which have empty intersection with $A$, for $A\subseteq F$.
  \end{enumerate}
\end{enumerate}

Given an instance $(G,S,f,\ell)$ of the \textsc{RD-DisjointCluster} problem, we define an instance $(U,F,\beta,f',\ell')$ of the SCP problem as follows:

\begin{itemize}
    \item $U=S$.
    \item $F=\{S_i=N(v_i)\cap S~\vert~v_i\in G\setminus S\}$.
    \item $\beta_i=\{S_j~\vert~v_j\in C_i\}$, for $i\in [q]$.
    \item $f'=f$.
    \item $\ell'=\ell$.
\end{itemize}

It is not hard to show that the \textsc{RD-DisjointCluster} and SCP are equivalent problems.

\begin{observation}
    $(G,S,f,\ell)$ is a YES instance of the \textsc{RD-DisjointCluster} problem if and only if $(U,F,\beta,f',\ell')$ is a YES instance of the SCP problem.
\end{observation}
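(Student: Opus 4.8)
The plan is to exhibit an explicit cardinality- and condition-preserving bijection between candidate solutions of the two problems, and then verify that it carries YES-certificates to YES-certificates in both directions. Since the construction already sets up a one-to-one correspondence between the vertices of $G\setminus S$ and the members of the family $F$, namely $v_i \leftrightarrow S_i$, with the block $\beta_i$ collecting exactly those $S_j$ for which $v_j\in C_i$, the natural map to use is
$T \mapsto F' := \{S_i : v_i \in T\}$, with inverse $F' \mapsto T := \{v_i : S_i \in F'\}$. I would first record that this is a bijection between subsets of $G\setminus S$ and subfamilies of $F$, and in particular that $|F'| = |T|$.

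For the forward direction, I would assume $T\subseteq G\setminus S$ witnesses a YES for \textsc{RD-DisjointCluster} and check the two SCP clauses for $F'$. Clause (a): because $f'=f$ and $\beta_i=\{S_j : v_j\in C_i\}$, a block $\beta_i$ has flag $2$ exactly when $C_i$ does, and $F'\cap\beta_i\neq\emptyset$ holds iff some $v_j\in C_i$ lies in $T$, i.e. iff $T\cap C_i\neq\emptyset$; so clause (a) for $F'$ is literally clause (a) for $T$. Clause (b): again using $f'=f$ and the definition of $\beta_i$, a flag-$1$ block $\beta_i$ is disjoint from $F'$ iff the flag-$1$ clique $C_i$ is disjoint from $T$, whence $g(F')=g(T)$; combined with $|F'|=|T|$ and $\ell'=\ell$ this gives $2|F'|+g(F')=2|T|+g(T)\le\ell=\ell'$. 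The reverse direction is completely symmetric, applying the inverse map $F'\mapsto T$ and the same two observations read backwards.

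The main (and essentially only) obstacle is a bookkeeping subtlety: two distinct vertices $v_i\neq v_j$ of $G\setminus S$ may have identical $S$-neighbourhoods, so that $N(v_i)\cap S = N(v_j)\cap S$ as sets. For the map $T\leftrightarrow F'$ to be a genuine bijection, $F$ must be treated as a family \emph{indexed by} the vertices $v_i$ (one token $S_i$ per vertex) rather than as a set of distinct subsets, and the blocks $\beta_i$ must likewise be read as indexed objects forming a true partition of the $|G\setminus S|$ tokens. Once this convention is fixed, the correspondence is bijective and every clause translates verbatim. I would state this indexing convention explicitly at the outset so that the two clauses transfer without any further case analysis, which reduces the proof to the routine unfolding of definitions sketched above.
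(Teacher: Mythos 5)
Your proof is correct and follows exactly the argument the paper intends: the paper itself offers no proof of this observation beyond remarking that the equivalence ``is not hard to show,'' and the natural vertex-to-set bijection $v_i\leftrightarrow S_i$ with the verbatim translation of the two clauses is precisely what is implicitly relied upon. Your remark that $F$ must be treated as an indexed family (so that vertices with identical $S$-neighbourhoods yield distinct tokens) is a genuine and worthwhile clarification that the paper glosses over.
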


Now, we propose an algorithm to solve the SCP problem.

\begin{theorem}
    The \textsc{Set-CoverWithPartition} problem can be solved in $2^{\vert U\vert}O(m\cdot\vert U\vert)$ time.
\end{theorem}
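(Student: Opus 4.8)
The plan is to build a dynamic program whose only persistent state is a subset $X\subseteq U$ recording which universe elements have already been covered; iterating over these $2^{\vert U\vert}$ subsets is what produces the exponential factor in the claimed bound. The natural (and binding) reading of the problem is that a feasible $F'$ must cover $U$, so the task is to cover $U$ at minimum cost while respecting the flag constraints. I would first rewrite the objective as $2\vert F'\vert+g(F')=t_1+\bigl(2\vert F'\vert-h_1(F')\bigr)$, where $t_1$ is the fixed number of flag-$1$ blocks and $h_1(F')$ counts the flag-$1$ blocks met by $F'$; it then suffices to minimise $2\vert F'\vert-h_1(F')$ subject to covering $U$ and to hitting every flag-$2$ block.

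First I would process the blocks $\beta_1,\dots,\beta_q$ one at a time, maintaining a table $D(X)$ equal to the best value of $2(\#\text{chosen sets})-(\#\text{flag-}1\text{ blocks met})$ over all selections made so far that cover at least the elements of $X$. Handling a single block in isolation is exactly what makes the per-block flags tractable: while processing $\beta_i$ I know whether it has been met, so I can locally enforce ``at least one set'' for a flag-$2$ block, ``optional, with a one-time discount of $1$'' for a flag-$1$ block, and ``optional, no discount'' for a flag-$0$ block.

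The update for a block is a bounded relaxation over the coverage lattice in the spirit of Bellman--Ford: starting from the current table I repeatedly apply $D(X)\leftarrow\min_{S_j\in\beta_i}\bigl(D(X\setminus S_j)+2\bigr)$. Because a useful set deletes at least one element from the remaining demand, at most $\vert U\vert$ rounds are needed to account for choosing several sets from the same block. For a flag-$2$ block I force at least one relaxation step; for a flag-$1$ block I run the relaxation on a separate copy, credit the discount once for having met the block, and take the pointwise minimum with the ``skip'' table. The reported answer is obtained by comparing $t_1+D(U)$ with $\ell$. Each block costs $O(\vert U\vert\cdot 2^{\vert U\vert}\cdot\vert\beta_i\vert)$, and since the blocks partition $F$ we have $\sum_i\vert\beta_i\vert=m$, so the total time is $2^{\vert U\vert}\,O(m\cdot\vert U\vert)$ as required; correctness would follow by induction, using the disjoint split of the demand between the current block and the earlier ones together with the monotone ``covers at least $X$'' semantics.

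I expect the main obstacle to be precisely the fact that the objective is not a plain set-cover count: the flag-$2$ (mandatory selection) and flag-$1$ (conditional penalty) conditions are attached to blocks, and naively remembering ``which blocks have already been met'' would inflate the state by a forbidden $2^{q}$ factor. The device that overcomes this is to process the blocks sequentially, so that each per-block condition is live only during the handling of its own block, leaving the covered subset of $U$ as the sole global state.
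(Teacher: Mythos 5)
Your proposal is correct and is essentially the paper's algorithm: a dynamic program over subsets of $U$ recording the covered elements, processing the sets in block order so that each flag constraint is enforced locally while it is ``live,'' exactly to avoid the $2^q$ blow-up you identify; the paper implements the same idea with a per-set recurrence $OPT[W,j,b]$ whose extra coordinate $b\in\{0,1,2\}$ carries the residual obligation of the current block, which plays the role of your forced-first-selection and one-time discount (and of your reformulation $g(F')=t_1-h_1(F')$). The only substantive deviation is your Bellman--Ford-style relaxation with up to $\vert U\vert$ rounds per block, which is correct but superfluous (a single pass over the block's sets in any fixed order already suffices, since reusing a set never helps a monotone cover) and, once the $O(\vert U\vert)$ cost of each set difference is charged as the paper does, costs an extra factor of $\vert U\vert$ over the stated bound; you should also make explicit that the ``discount'' branch of a flag-$1$ block forces at least one set to be chosen before subtracting $1$.
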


\begin{proof}
    We propose a dynamic programming algorithm to solve the problem. For every $W\subseteq U$, $j\in [m]$ and $b\in \{0,1,2\}$, we define $OPT[W,j,b]:=min_{X}\{2\vert X\vert+g_j(X)\}$, where $X$ satisfies the following properties:
    
    \begin{enumerate}
        \item $X\subseteq \{S_1,\ldots,S_j\}$.
        \item $X$ covers $W$.
        \item Let $\beta_x$ be the block that contains $S_j$. We redefine $f_x=b$, where $f_x$ is the flag associated with $\beta_x$. From every block $\beta_i$ ($i\leq x$) with $f_i=2$, at least one set from $\beta_i$ is in $X$.
    
        \item The function $g_j$ is defined as follows. $g_j(X):=$ number of blocks $\beta_i~(i\leq x)$ with  $f_i=1$, which have empty intersection with $X$. 
    \end{enumerate} 

    Now, coming to the base case, for every $W\subseteq U$, with $W\neq \emptyset$ and $b\in \{0,1,2\}$; $OPT[W,1,b]=2$ if $W\subseteq S_1$, $OPT[W,1,b]=\infty$, otherwise. 
    
    If $W=\emptyset$, $OPT[W,1,b]=b$, for $b\in \{0,1,2\}$. To compute all the values of $OPT[W,j,b]$, we initially set all the remaining values to be $\infty$. We construct the following recursive formulation for $OPT[W,j+1,b]$, for $j\geq 1$:

    \noindent \textbf{Case 1}: $S_{j+1}$ is not the first set of the block $\beta_x$.

    Note that two possibilities appear here. First, we pick $S_{j+1}$ in the solution $X$. Hence, we are left with the problem of covering $W\setminus S_{j+1}$ with some subset of $\{S_1,\ldots,S_j\}$ and since $S_{j+1}$ from the partition $\beta_x$ is already taken in solution, so the flag of $\beta_x$ can be reset to $0$. Hence, in this case $OPT[W,j+1,b]=2+ OPT[W\setminus S_{j+1},j,0]$. 
    
    In the latter case, we do not pick $S_{j+1}$ in $X$; hence nothing is changed except the fact that now we need to cover $W$ with a subset of $\{S_1,\ldots, S_j\}$ and the flag of $\beta_x$ remains unchanged as $b$. Hence, $OPT[W,j+1,b]= OPT[W,j,b]$.

    So, $OPT[W,j+1,b]=\min\{2+ OPT[W\setminus S_{j+1},j,0], OPT[W,j,b]\}$.

    \noindent \textbf{Case 2}: $S_{j+1}$ is the first set of the block $\beta_x$. Here, three scenarios can appear:

    \textbf{Case 2.1}: $b=2$.

    In this case, there is no option but to include $S_{j+1}$ in the solution as $b=2$. Hence, we take $S_{j+1}$ in the solution and shift to the previous block. Now we need to cover $W\setminus S_{j+1}$ with a subset of $\{S_1,\ldots,S_j\}$. Hence, $OPT[W,j+1,b]= 2+OPT[W\setminus S_{j+1},j,f_{x-1}]$.

    \textbf{Case 2.2}: $b=0$.
    
    In this case, there are two choices, to include $S_{j+1}$ in the solution or not. If we include $S_{j+1}$ in the solution, then $OPT[W,j+1,b]= 2+OPT[W\setminus S_{j+1},j,f_{x-1}]$. If we do not, then $OPT[W,j+1,b]= OPT[W,j,f_{x-1}]$. Hence $OPT[W,j+1,b]= \min\{2+OPT[W\setminus S_{j+1},j,f_{x-1}], OPT[W,j,f_{x-1}]\}$

    \textbf{Case 2.3}: $b=1$.

    Similarly, in this case, there are two choices. If $S_{j+1}$ is included in the solution then $OPT[W,j+1,b]= 2+OPT[W\setminus S_{j+1},j,f_{x-1}]$, by similar argument as above. If not, then $S_{j+1}$ has to contribute $1$ in $OPT[W,j+1,b]$, as at least one set from the block $\beta_x$ has to contribute $1$ to $OPT[W,j+1,b]$ and $S_{j+1}$ is the only set left in $\beta_x$ at this moment. So, in this case, $OPT[W,j+1,b]=1+OPT[W,j,f_{x-1}]$. Hence, $OPT[W,j+1,b]=\min\{2+OPT[W\setminus S_{j+1},j,f_{x-1}], 1+OPT[W,j,f_{x-1}]\}$.

    We compute $OPT[W,j,b]$ in the increasing order of size of $W,j,b$. Hence, there are $3\cdot 2^{\vert U\vert}\cdot m$ subproblems. It takes $\vert U\vert$ time to compute set differences (like $W\setminus S_{j+1}$). Hence, the time-complexity of our algorithm is $2^{\vert U\vert}O( m\cdot \vert U\vert)$.
\end{proof}

Hence, the following corollary can be concluded.

\begin{corollary}
    The \textsc{RD-DisjointCluster} problem can be solved in time $2^{\vert S\vert}n^{O(1)}$.
\end{corollary}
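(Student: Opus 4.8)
The plan is to combine the equivalence established in the preceding observation with the dynamic programming algorithm for the SCP problem, so the corollary follows essentially by bookkeeping of parameters. First I would take the given instance $(G,S,f,\ell)$ of the \textsc{RD-DisjointCluster} problem and apply the reduction described just above to produce the SCP instance $(U,F,\beta,f',\ell')$. This reduction is computable in polynomial time: it sets $U=S$, creates one set $S_i=N(v_i)\cap S$ for each vertex $v_i\in G\setminus S$, groups these sets into blocks $\beta_i$ according to the cliques $C_i$, and copies over the flag vector and the budget. Each neighbourhood intersection is computed in $O(|S|)$ time and there are at most $n$ vertices in $G\setminus S$, so the whole construction runs in $n^{O(1)}$ time.

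Next I would invoke the theorem to solve the resulting SCP instance. The two crucial facts about the parameters are that $|U|=|S|$ and that the number of sets is $m=|G\setminus S|\leq n$. Substituting these into the running time $2^{|U|}O(m\cdot|U|)$ of the SCP algorithm yields $2^{|S|}O(n\cdot|S|)=2^{|S|}n^{O(1)}$, as required.

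Correctness then follows directly from the observation, which asserts that $(G,S,f,\ell)$ is a YES instance of the \textsc{RD-DisjointCluster} problem if and only if $(U,F,\beta,f',\ell')$ is a YES instance of the SCP problem. Since the dynamic program computes the optimal value $\min_X\{2|X|+g(X)\}$ over all feasible $X$, the decision procedure accepts precisely when this minimum is at most $\ell$. Chaining the polynomial-time reduction with the $2^{|S|}n^{O(1)}$-time solver keeps the whole pipeline within $2^{|S|}n^{O(1)}$ time, giving the stated bound.

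I do not expect a genuine obstacle here, since this is a corollary of already-proved statements; the only points that warrant care, rather than real difficulty, are verifying that the constructed universe is exactly $S$ (so the exponent is $|S|$ and nothing larger) and that the block partition $\beta$ faithfully encodes the clique-membership constraints of the flags. Both are immediate from the definition of the reduction, so the argument reduces to assembling these observations and tracking the parameter sizes.
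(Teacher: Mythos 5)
Your proposal is correct and matches the paper's (implicit) argument exactly: the paper derives this corollary directly from the equivalence observation and the $2^{\vert U\vert}O(m\cdot\vert U\vert)$ algorithm for SCP, using $\vert U\vert=\vert S\vert$ and $m\leq n$. Nothing further is needed.
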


\begin{theorem}\label{th:1_main}
    The RD-CVD problem can be solved in time $4^kn^{O(1)}$.
\end{theorem}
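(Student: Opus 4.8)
The plan is to assemble the final running time by combining the guessing phase over the CVD set $S$ with the dynamic programming result already established for the \textsc{RD-DisjointCluster} problem (and hence, via the equivalence observation, for SCP). Recall that the algorithm begins by guessing the pair $(S_1,S_2)$ where $S_2 = V_2 \cap S$ and $S_1 = V_1 \cap S$ for an optimal Roman Dominating pair $(V_1,V_2)$. Since $S_1$ and $S_2$ are disjoint subsets of $S$ with $\vert S\vert = k$, each vertex of $S$ independently falls into one of three categories (in $S_1$, in $S_2$, or in neither), so the number of distinct guesses is exactly $3^k$. First I would observe that, after fixing a guess, constructing the instance $(\hat{G},\hat{S},f,\hat{\ell})$ of \textsc{RD-DisjointCluster} takes only polynomial time: deleting $N[S_2]\cap S$ and $S_1$, classifying each clique as type $T_0$, $T_1$, or $T_2$, and computing $\hat{\ell}$ are all straightforward.

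Next I would invoke the Corollary, which solves each resulting \textsc{RD-DisjointCluster} instance in time $2^{\vert \hat{S}\vert}n^{O(1)}$. The key point to nail down is the size of $\hat{S}$. By construction, $\hat{S} = S \setminus ((N[S_2]\cap S)\cup S_1)$; these are precisely the vertices of $S$ that were placed neither in $S_2$ (nor dominated by $S_2$) nor in $S_1$. So each vertex of $S$ is either ``committed'' by the guess (landing in $S_1$, in $S_2$, or in $N(S_2)\cap S$) and thus removed, or it survives into $\hat{S}$. The cleanest way to bound the total work is to sum over all guesses: for a guess in which $\vert S_1\vert = a$ and $\vert S_2\vert = b$, the dynamic program runs in $2^{\vert\hat{S}\vert}n^{O(1)} \le 2^{k-a-b}n^{O(1)}$, while the number of such guesses is at most $\binom{k}{a}\binom{k-a}{b}$. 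The total is then $\sum_{a,b}\binom{k}{a}\binom{k-a}{b}2^{k-a-b}n^{O(1)}$, and the inner double sum is a standard multinomial-type identity equal to $(1+1+2)^k = 4^k$. This is exactly where the announced $4^k$ bound comes from, and I expect this counting step — correctly pairing each guessed configuration with the corresponding shrunken universe and recognizing the sum as $4^k$ — to be the main (if modest) obstacle; the rest is bookkeeping.

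Finally I would argue correctness: by the definition of the guessing scheme, some guess $(S_1,S_2)$ coincides with the restriction of an optimal RDF to $S$, and for that guess the Observation on $T_2$ cliques guarantees that an optimal extension can be assumed to place a label-$2$ vertex in every $T_2$ clique, which is exactly the constraint encoded by flag $f_i = 2$ in \textsc{RD-DisjointCluster}. Thus the minimum over all guesses of the \textsc{RD-DisjointCluster} optima, added back to the fixed cost $2\vert S_2\vert + \vert S_1\vert$, equals $\gamma_R(G)$. Taking the overall minimum across the $3^k$ guesses, each solved within the budget computed above, yields total running time $4^kn^{O(1)}$, completing the proof.
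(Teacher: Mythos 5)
Your proposal is correct and follows essentially the same route as the paper's proof: guess the disjoint pair $(S_1,S_2)\subseteq S$, reduce to \textsc{RD-DisjointCluster}, invoke the $2^{\vert\hat{S}\vert}n^{O(1)}$ corollary with $\vert\hat{S}\vert\leq k-\vert S_1\vert-\vert S_2\vert$, and evaluate $\sum_{a,b}\binom{k}{a}\binom{k-a}{b}2^{k-a-b}=(1+1+2)^k=4^k$. Your write-up is in fact somewhat more careful than the paper's (which omits the explicit correctness argument and the $a=0$, $b=0$ terms of the sum), but there is no substantive difference in approach.
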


\begin{proof}
    Given an instance $(G,S,\ell)$ of the RD-CVD problem and for every guess of $S'_1, S'_2\subseteq S$ (with $\vert S'_1\vert=i_1$ and $\vert S'_2\vert=i_2$), we can construct an instance $(\hat{G},\hat{S},f,\hat{\ell})$ of the \textsc{RD-DisjointCluster} problem, which can be solved in time $2^{k-i_1-i_2}n^{O(1)}$. Hence, total time taken is $\sum_{i_1=1}^{k} ({k\choose i_1}\sum_{i_2=1}^{k-i_1}({k-i_1\choose i_2}2^{k-i_1-i_2}))n^{O(1)}$=$4^kn^{O(1)}$.
\end{proof}

In the next section, using a similar approach, we show that the IRD-CVD problem is also FPT.

\subsection{Independent Roman Domination}\label{sec:ind_rom}

In this section, we propose an FPT algorithm for the Independent Roman Domination problem when the parameter is the CVD size.

Similarly, like the case of Roman Domination, a guess for $S_1=V_1\cap S$ and $S_2=V_2\cap S$ is made, where $(V_1, V_2)$ is an optimal independent Roman Dominating pair. At first, we guess an independent set $S_2$ from $S$ and then delete all the vertices of $N[S_2]$ from the graph, as if our choice of $S_2$ is right, then all the vertices in $N(S_2)\setminus S_2$ should have label $0$. Then, we choose another independent set $S_1$ from the remaining $S$ and delete all the vertices of $S_1$ and $N(S_1)\cap (G\setminus S)$ from the remaining graph. Note that if there exists a clique $C_i\subseteq N[S_1]\cup N[S_2]$  such that $C_i$ has a vertex $v$, that is not adjacent to any vertex of $S_2$, but it is adjacent to some vertex in $S_1$, then our choices of $S_1$, $S_2$ are incorrect, and we do not move further with these choices of $S_1$ and $S_2$.

Note that $S$ is a CVD set, hence $G\setminus S$ is disjoint union of cliques. Let $G\setminus S= C_1\cup C_2\cup\ldots C_q$, where every $C_i$ is a clique and $\vert C_i\vert=\ell_i$, for $i\in [q]$. Note that $q\leq n-k$. Note that, after the selection of $S_1$ and $S_2$ and the deletion process, every clique belongs to exactly one of the following two types:

Type $1$ ($T_1$) cliques: $C_i$ is a $T_1$ clique if $C_i$ has exactly one vertex.

Type $2$ ($T_2$) cliques: $C_i$ is a $T_2$ clique if $C_i$ has at least two vertices.

\begin{observation}
    Given a $T_2$ clique $C_i$, any independent Roman Dominating pair $(V_1,V_2)$ extended from $(S_1,S_2)$ has the following property: $C_i\cap V_2\neq \emptyset$.
\end{observation}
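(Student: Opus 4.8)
The plan is to argue by contradiction, exploiting the independence constraint that distinguishes this setting from the plain Roman Domination case. I would suppose that $(V_1,V_2)$ is an independent Roman Dominating pair extended from $(S_1,S_2)$ with $C_i\cap V_2=\emptyset$, and derive a contradiction. The key structural fact I would first record is what the deletion process guarantees about the surviving vertices of $C_i$: since all of $N[S_2]$ has been removed, every remaining vertex of $C_i$ is non-adjacent to $S_2$, and since $V_2\cap S=S_2$, this means no vertex of $C_i$ can have a label-$2$ neighbour lying in $S$.

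Next I would use the independence of $V_1\cup V_2$ together with the fact that $C_i$ is a clique. Because $C_i$ is a $T_2$ clique it contains at least two vertices, all pairwise adjacent; hence at most one of them can receive a nonzero label, so at least one vertex $v\in C_i$ satisfies $f(v)=0$. Being a label-$0$ vertex, $v$ must have a neighbour of label $2$. Its neighbours split into those inside $C_i$ and those inside $S$; by the assumption $C_i\cap V_2=\emptyset$ none of the former carries label $2$, and by the structural fact above none of the latter does either. This contradicts that $(V_1,V_2)$ is an RDF, so $C_i\cap V_2\neq\emptyset$ must hold.

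The main obstacle I anticipate is not the combinatorial core, which is short, but pinning down precisely which neighbours a surviving vertex of $C_i$ can have after the two-stage deletion of $N[S_2]$ and of $S_1\cup(N(S_1)\cap(G\setminus S))$. One must verify that every potential label-$2$ dominator outside the clique has already been deleted or is forced to carry a label in $\{0,1\}$; in particular one should confirm that vertices of $S\setminus(S_1\cup S_2)$ indeed receive label $0$ and that no surviving $C_i$-vertex retains an edge to $S_2$. Once this neighbourhood bookkeeping is settled, the independence argument closes the proof immediately, and it is this sharper conclusion, namely that we always have $C_i\cap V_2\neq\emptyset$ rather than the two-case alternative of the Roman Domination observation, that the independence constraint buys us.
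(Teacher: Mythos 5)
Your argument is correct and is essentially the paper's own proof: both use the independence of $V_1\cup V_2$ to force a label-$0$ vertex inside the clique (since at most one vertex of a clique can carry a nonzero label), then observe that this vertex has no label-$2$ neighbour in $S$ because $N[S_2]$ was deleted and $V_2\cap S=S_2$, so its label-$2$ dominator must lie in $C_i$. The only cosmetic difference is that you phrase it as a contradiction while the paper argues directly.
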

\begin{proof}
    $C_i$ is a $T_2$ clique, hence there exist at least two vertices $v_1, v_2$ in $C_i$. Note that both of them can not have non zero labels; at least one of them must have label $0$. Without loss of generality, let $v_1$ have label $0$, but $v_1$ does not have any neighbor in $S$, which has label $2$, which implies a vertex in $C_i$ must have label $2$. Hence, the result follows. 
\end{proof}

Hence the remaining problem can be rephrased as follows:

\medskip
\noindent\underline{\textsc{IRD-DisjointCluster} problem}
\\
[-13pt]
\begin{enumerate}
  \item[] \textbf{Input}: A graph $G=(V,E)$, a subset $S\subseteq V$ such that every component of $G\setminus S$ is a clique, a $(1,2)$-flag vector $f=(f_1,f_2,\ldots,f_q)$ corresponding to the cliques $(C_1,C_2,\ldots,C_q)$ and $\ell\in \mathbb{Z^+}$.
  \item[] \textbf{Parameter}: $\vert S\vert$.
  \item[] \textbf{Question}: Does there exists a subset $T\subseteq G\setminus S$ which satisfies all of the following conditions?
  \begin{enumerate}
      \item For every $C_i$ with $f_i=2$, $\vert T\cap C_i\vert =1$.
      \item $2\vert T\vert + g(T)\leq \ell$, where $g(T)=$ number of cliques with flag $1$, which have empty intersection with $T$.
  \end{enumerate}
\end{enumerate}

For an instance $(G,S,\ell)$ of the IRD-CVD problem, with cliques $C_1,\ldots,C_q$ and the guesses of $S_1,S_2$, we build an instance $(\hat{G}, \hat{S}, f,\hat{\ell})$ of the \textsc{IRD-DisjointCluster} problem as follows:

\begin{itemize}
    \item $\hat{G}=G\setminus (S_1\cup N[S_2]\cup  (N(S_1)\cap (G\setminus S$)).
    \item $\hat{S}=S\setminus (N[S_2]\cup S_1)$.
    \item $\hat{\ell}=\ell-2\vert S_2\vert-\vert S_1\vert$.
    \item $f_i=j$, if $C_i$ is a $T_j$ clique, $i\in [q]$ and $j\in \{1,2\}$. 
\end{itemize}

\noindent \textbf{Formulation of the problem as a variant of set cover}: We define a variant of the set cover problem similarly to the Roman Domination problem. Given an instance of the \textsc{IRD-DisjointCluster} problem, we construct an instance of the set cover problem. Let $(G,S,\ell,f)$  be an instance of the \textsc{IRD-DisjointCluster} problem and $\rho$ be any arbitrary order of the vertex set $G\setminus S$. We take the universe $U=S$ and $F=\{S_1,S_2,\ldots,S_{\vert G\setminus S\vert}\}$, where $S_i=N(v_i)\cap S$ for every $i\in [\vert G\setminus S\vert]$. Now, we modify the usual set cover problem to suit our problem. The modified set cover problem is defined below:

\medskip
\noindent\underline{\textsc{Independent-Set-CoverWithPartition} problem (ISCP)}
\\
[-13pt]
\begin{enumerate}
  \item[] \textbf{Input}: Universe $U$, a family of sets $F=\{S_1,S_2,\ldots,S_m\}$, a partition of $\beta=(\beta_1,\ldots,\beta_q)$ of $F$, a $(1,2)$ flag vector $f=(f_1,f_2,\ldots,f_q)$ corresponding to each block in the partition $\beta$ and a non-negative integer $\ell$.
  \item[] \textbf{Parameter}: $\vert U\vert$.
  \item[] \textbf{Question}: Does there exists a subset $F'\subseteq F$ which satisfies all of the following conditions?
  \begin{enumerate}
      \item For every $\beta_i$ with $f_i=2$, $\vert F'\cap \beta_i\vert=1$.
      \item $2\vert F'\vert + g(F')\leq \ell$, where $g(A)=$ number of blocks with flag $1$, which have empty intersection with $A$; for $A\subseteq F$.
  \end{enumerate}
\end{enumerate}

Given an instance $(G,S,f,\ell)$ of the \textsc{IRD-DisjointCluster} problem, we define an instance $(U,F,\beta,f',\ell')$ of the ISCP problem as follows:

\begin{itemize}
    \item $U=S$.
    \item $F=\{S_i=N(v_i)\cap S~\vert~v_i\in G\setminus S\}$.
    \item $\beta_i=\{S_j~\vert~v_j\in C_i\}$, for $i\in [q]$.
    \item $f'=f$.
    \item $\ell'=\ell$.
\end{itemize}

\begin{observation}
    $(G,S,f,\ell)$ is a YES instance of the \textsc{IRD-DisjointCluster} problem if and only if $(U,F,\beta,f',\ell')$ is a YES instance of the ISCP problem.
\end{observation}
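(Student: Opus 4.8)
The plan is to prove the stated equivalence by exhibiting the natural bijection between the two problems, which is essentially definitional given how the ISCP instance was constructed from the \textsc{IRD-DisjointCluster} instance. The key structural fact is the correspondence $v_i \leftrightarrow S_i = N(v_i)\cap S$ between vertices of $G\setminus S$ and sets in $F$, under which the clique $C_i$ maps exactly to the block $\beta_i$. I would argue that a subset $T \subseteq G\setminus S$ and the corresponding family $F' = \{S_i : v_i \in T\}$ satisfy the respective conditions simultaneously, since the flag vector, the weight function $g$, and the partition are all transported identically ($f'=f$, $\ell'=\ell$, $U=S$).

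First I would establish the forward direction. Suppose $T \subseteq G\setminus S$ witnesses a YES instance of \textsc{IRD-DisjointCluster}. Set $F' = \{S_i : v_i \in T\}$. For condition (a), note that $v_i \in C_i$ precisely when $S_i \in \beta_i$, so $|T \cap C_i| = |F' \cap \beta_i|$; hence $|T\cap C_i|=1$ for every clique with $f_i=2$ translates verbatim into $|F'\cap \beta_i|=1$ for every block with $f'_i=2$. For condition (b), I would observe that $|T| = |F'|$ (the map $v_i \mapsto S_i$ is a bijection between $T$ and $F'$) and that $g(T) = g(F')$, because a clique $C_i$ with flag $1$ has empty intersection with $T$ iff no $v_j\in C_i$ lies in $T$ iff no $S_j \in \beta_i$ lies in $F'$, i.e.\ iff block $\beta_i$ has empty intersection with $F'$. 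Therefore $2|F'| + g(F') = 2|T| + g(T) \le \ell = \ell'$, so $F'$ witnesses a YES instance of ISCP.

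The reverse direction is entirely symmetric: given a witness $F' \subseteq F$ for ISCP, define $T = \{v_i : S_i \in F'\}$ and run the same equalities $|T\cap C_i| = |F'\cap\beta_i|$, $|T|=|F'|$, and $g(T)=g(F')$ in the opposite direction to recover a witness for \textsc{IRD-DisjointCluster}. Since the construction fixes $U=S$, $\beta_i = \{S_j : v_j \in C_i\}$, $f'=f$, and $\ell'=\ell$, none of these quantities require any nontrivial translation, and the two chains of equalities close the proof.

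Honestly, there is no real obstacle here: the statement is a bookkeeping verification that the reduction is a faithful reformulation rather than a genuine algorithmic reduction. The only point that warrants a word of care is confirming that the clique-to-block correspondence $C_i \mapsto \beta_i$ is a bijection between the parts of $G\setminus S$ and the parts of the partition $\beta$ (so that no set $S_j$ is orphaned or double-counted), which is immediate from $\beta_i = \{S_j : v_j \in C_i\}$ together with the fact that the cliques $C_1,\dots,C_q$ partition $G\setminus S$. Given this, the equalities $|T\cap C_i|=|F'\cap\beta_i|$ and $g(T)=g(F')$ hold on the nose, and the equivalence follows.
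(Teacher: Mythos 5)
Your proof is correct and takes exactly the route the paper intends: the paper states this observation without proof, treating it as the immediate definitional correspondence $v_i \leftrightarrow S_i$, $C_i \leftrightarrow \beta_i$, which is precisely the bijection you verify. The only implicit point worth keeping in mind is that $F$ must be read as an indexed family (so that vertices with identical neighborhoods in $S$ give distinct sets $S_i$), which is what makes your map $T \mapsto F'$ a genuine bijection.
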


Next we propose an algorithm to solve the ISCP problem. 

\begin{theorem}
    The \textsc{Independent-Set-CoverWithPartition} problem can be solved in $2^{\vert U\vert}O( m\cdot\vert U\vert)$ time.
\end{theorem}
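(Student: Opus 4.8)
The plan is to reuse, essentially verbatim, the dynamic programming scheme already developed for the \textsc{Set-CoverWithPartition} problem, since the \textsc{ISCP} problem differs only in two respects: the flag vector takes values in $\{1,2\}$ rather than $\{0,1,2\}$ (so there are no flag-$0$ blocks), and for each flag-$2$ block $\beta_i$ we now require $\vert F'\cap\beta_i\vert=1$ (exactly one chosen set) in place of the weaker $F'\cap\beta_i\neq\emptyset$. I would keep the same table $OPT[W,j,b]$, indexed by a subset $W\subseteq U$, an index $j\in[m]$, and a flag value $b\in\{0,1,2\}$, with the same intended meaning: the minimum of $2\vert X\vert+g_j(X)$ over all $X\subseteq\{S_1,\dots,S_j\}$ that cover $W$ and meet the block constraints for every block contained in $\{S_1,\dots,S_j\}$, where $b$ is the redefined flag of the block $\beta_x$ containing $S_j$. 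The complexity bookkeeping is then identical, so the real content lies in redefining the recurrence to encode ``exactly one''.

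The key idea for the ``exactly one'' constraint is to let the flag value double as a status bit recording whether a set has already been chosen inside the current block. Concretely, for a flag-$2$ block I would read $b=2$ as ``no set of this block has been chosen yet, and exactly one still must be'', and, once a set of that block is chosen while peeling from the right, I would switch the status to a \emph{closed} value that forbids any further choice among the remaining (earlier) sets of the same block. Since the input flags are only $1$ or $2$, the value $b=0$ is free to carry this closed status: for the block $\beta_x$ currently processed, $b=0$ means ``already chosen one, forbid more'' when $f_x=2$, and ``soft penalty already avoided, further choices allowed'' when $f_x=1$. These two readings never clash, because at a fixed $j$ the block $\beta_x$ and hence its original flag $f_x$ are known, so the recurrence can branch on $f_x$.

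With this convention the recurrences parallel the two cases of the \textsc{SCP} algorithm. When $S_{j+1}$ is not the first set of its block $\beta_x$: if $f_x=2$ and $b=2$ we either take $S_{j+1}$ and pass the closed status $0$ leftward, giving $2+OPT[W\setminus S_{j+1},j,0]$, or skip it and keep the obligation, giving $OPT[W,j,2]$; if $f_x=2$ and $b=0$ we are forbidden to take $S_{j+1}$, so $OPT[W,j+1,0]=OPT[W,j,0]$; and if $f_x=1$ we treat $b\in\{0,1\}$ exactly as in the \textsc{SCP} flag-$1$ analysis. When $S_{j+1}$ is the first set of $\beta_x$ we move to the previous block with its flag $f_{x-1}$: a flag-$2$ block with $b=2$ now \emph{forces} the choice of $S_{j+1}$ (the last chance to meet ``exactly one''), giving $2+OPT[W\setminus S_{j+1},j,f_{x-1}]$, while $b=0$ forbids it and yields $OPT[W,j,f_{x-1}]$; a flag-$1$ first set is handled by the two familiar sub-cases, paying the penalty $1$ when the block is left empty. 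The base cases over block $\beta_1$ are set analogously, being careful that a flag-$2$ block of size one must take its unique set.

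The main obstacle I anticipate is getting the ``exactly one'' accounting right at both ends: the closed status must genuinely prevent a second choice inside a flag-$2$ block (ruling out two or more chosen sets), while the forced-pick rule at the first set of a still-open flag-$2$ block must rule out zero choices. Once the open/closed status is threaded correctly through the within-block transitions and the block boundaries, correctness follows by the same induction on $j$ as for \textsc{SCP}. The running time is unchanged: there are $3\cdot 2^{\vert U\vert}\cdot m$ subproblems, each computable in $O(\vert U\vert)$ time for the set difference $W\setminus S_{j+1}$, so the total is $2^{\vert U\vert}O(m\cdot\vert U\vert)$.
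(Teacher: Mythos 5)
Your proposal is correct, but it enforces the ``exactly one'' constraint by a genuinely different mechanism than the paper. The paper keeps the flag domain at $\{1,2\}$ and handles exclusivity positionally: in its Case~1, whenever $S_{j+1}$ is taken from block $\beta_x$, the recurrence jumps directly to $OPT[W\setminus S_{j+1},k,f_{x-1}]$ where $S_k$ is the \emph{last set of the previous block} $\beta_{x-1}$, so no further set of $\beta_x$ can ever be selected; combined with the forced pick at the first set of a still-unsatisfied flag-$2$ block, this yields exactly one selection per flag-$2$ block. You instead enlarge the flag domain to $\{0,1,2\}$ and thread a ``closed'' status $b=0$ leftward through the remaining sets of the block, branching on the original flag $f_x$ to disambiguate the two readings of $0$. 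Both are sound and both give $O(2^{\vert U\vert}\cdot m\cdot\vert U\vert)$ (yours with $3\cdot 2^{\vert U\vert}\cdot m$ states versus the paper's $2\cdot 2^{\vert U\vert}\cdot m$, a constant-factor difference). One genuine semantic divergence worth noting: the paper's block-jump forbids picking more than one set from \emph{any} block, including flag-$1$ blocks, whereas your scheme still permits multiple picks inside a flag-$1$ block. Your version actually matches the literal ISCP statement more faithfully; the discrepancy is immaterial for the intended application because in the IRD reduction every flag-$1$ block comes from a singleton clique. Your base case is stated only sketchily (``set analogously''), but anchoring it at $j=1$ with the three values of $b$ as you describe makes it work, so this is a presentational gap rather than a mathematical one.
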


\begin{proof}
    We propose a dynamic programming algorithm to solve the problem, similar to the algorithm used in the previous section, with slight modifications. For every $W\subseteq U$, $j\in [m]$ and $b\in \{1,2\}$, we define $OPT[W,j,b]:=min_{X}\{2\vert X\vert+g_j(X)\}$, where $X$ satisfies the following properties:
    
    \begin{enumerate}
        \item $X\subseteq \{S_1,\ldots,S_j\}$.
        \item $X$ covers $W$.
        \item Let $\beta_x$ be the block that contains $S_j$. We redefine $f_x=b$, where $f_x$ is the flag associated with $\beta_x$. From every block $\beta_i$ ($i\leq x$) with $f_i=2$, exactly one set from $\beta_i$ is in $X$.
    
        \item $g_j(X):=$ number of blocks $\beta_i~(i\leq x)$ with  $f_i=1$, which have empty intersection with $X$.
    \end{enumerate}

    The base cases are defined as follows:

    If $S_j\in \beta_1$ and $W\neq \emptyset$, then $OPT[W,j,b]=2$ if $W\subseteq S_i$, for some $i\leq j$ and $OPT[W,j,b]=\infty$ otherwise.

    If $S_j\in \beta_1$ and $W=\emptyset$, then $OPT[W,j,b]=b$.  To compute all the values of $OPT[W,j,b]$, we initially set all the remaining values to be $\infty$. We construct the following recursive formulation for $OPT[W,j+1,b]$, (where $S_{j+1}\notin \beta_1$):

    \noindent \textbf{Case 1}: $S_{j+1}$ is not the first set of the block $\beta_x$.

    Then, two choices appear. The first one is to include $S_{j+1}$ in the solution, then we are left with the problem of covering $W\setminus S_{j+1}$ by a subset of $\{S_1,\ldots, S_k\}$, where $S_k$ is the last set in the block $\beta_{x-1}$. Hence for this choice, $OPT[W,j+1,b]=2+OPT[W\setminus S_{j+1},k,f_{x-1}]$.

    If $S_{j+1}$ is not  included in the solution, then $OPT[W,j+1,b]=OPT[W,j,b]$. Hence $OPT[W,j+1,b]=\min\{2+OPT[W\setminus S_{j+1},k,f_{x-1}], OPT[W,j,b]\}$.

    \noindent \textbf{Case 2}: $S_{j+1}$ is the first set of the block $\beta_x$.

    \textbf{Case 2.1}: $b=2$.\\
    In this case, there is no option but to include $S_{j+1}$ in the solution as $b=2$. Hence, we shift to the previous block, and now we need to cover $W\setminus S_{j+1}$ with a subset of $\{S_1,\ldots,S_j\}$. Hence, $OPT[W,j+1,b]= 2+OPT[W\setminus S_{j+1},j,f_{x-1}]$.

    \textbf{Case 2.2}: $b=1$.\\
    In this case, there are two choices. If $S_{j+1}$ is included in the solution then $OPT[W,j+1,b]= 2+OPT[W\setminus S_{j+1},j,f_{x-1}]$, by similar argument as above. If not, then $S_{j+1}$ has to contribute $1$ in $OPT[W,j+1,b]$, as at least one set from the block $\beta_x$ has to contribute $1$ to $OPT[W,j+1,b]$ and $S_{j+1}$ is the only set left in $\beta_x$ at this moment. So, in this case, $OPT[W,j+1,b]=1+OPT[W,j,f_{x-1}]$. Hence, $OPT[W,j+1,b]=\min\{2+OPT[W\setminus S_{j+1},j,f_{x-1}], 1+OPT[W,j,f_{x-1}]\}$.

     We compute $OPT[W,j,b]$ in the increasing order of size of $W,j,b$. Hence, there are $2\cdot 2^{\vert U\vert}\cdot m$ subproblems. Each subproblem takes $\vert U\vert$ time to compute set differences (like $W\setminus S_{j+1}$). Hence, the time-complexity of our algorithm is $2^{\vert U\vert}O( m\cdot \vert U\vert)$.
\end{proof}

Hence, the following corollary can be concluded.

\begin{corollary}
    The \textsc{IRD-DisjointCluster} problem can be solved in time $2^{\vert S\vert}n^{O(1)}$.
\end{corollary}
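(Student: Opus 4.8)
The plan is to derive this statement as an immediate consequence of the two ingredients already assembled just above it: the polynomial-time transformation from \textsc{IRD-DisjointCluster} to ISCP together with the equivalence observation, and the preceding theorem that solves ISCP in $2^{\vert U\vert}O(m\cdot\vert U\vert)$ time. The whole argument is a matter of invoking the reduction and then tracking how the parameters of the source instance transfer to the target instance.

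First I would take an arbitrary instance $(G,S,f,\ell)$ of \textsc{IRD-DisjointCluster} and apply the construction described immediately before the equivalence observation: fix an arbitrary ordering $\rho$ of $V(G\setminus S)$, set the universe $U=S$, build the family $F=\{S_i=N(v_i)\cap S~\vert~v_i\in G\setminus S\}$, partition $F$ into blocks $\beta_i=\{S_j~\vert~v_j\in C_i\}$ according to the clique membership of each vertex, and carry over the flag vector $f'=f$ and the threshold $\ell'=\ell$. This construction inspects only the adjacencies between $V(G\setminus S)$ and $S$ and is therefore computable in polynomial time, so it yields a valid instance $(U,F,\beta,f',\ell')$ of ISCP.

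Next I would invoke the equivalence observation, which asserts that $(G,S,f,\ell)$ is a YES-instance of \textsc{IRD-DisjointCluster} if and only if $(U,F,\beta,f',\ell')$ is a YES-instance of ISCP. Hence it suffices to decide the constructed ISCP instance, and I would do so by applying the preceding theorem, which runs in time $2^{\vert U\vert}O(m\cdot\vert U\vert)$. The final step is the parameter bookkeeping: by construction $\vert U\vert=\vert S\vert$ exactly, while $m=\vert F\vert=\vert V(G\setminus S)\vert\leq n$ and $\vert U\vert=\vert S\vert\leq n$, so the factor $O(m\cdot\vert U\vert)$ is absorbed into $n^{O(1)}$. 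Combining these gives the total running time $2^{\vert S\vert}n^{O(1)}$ claimed in the corollary.

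I do not expect a genuine obstacle here, since both the reduction and the ISCP algorithm are already in hand; the only points requiring care are confirming that the transformation is polynomial and that $\vert U\vert=\vert S\vert$ holds on the nose (rather than merely up to an additive constant), as any slack in the exponent would change the stated bound. Both are routine to verify from the explicit definition of the instance mapping.
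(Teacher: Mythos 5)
Your proposal is correct and follows exactly the route the paper intends: the paper states the corollary as an immediate consequence of the equivalence observation and the $2^{\vert U\vert}O(m\cdot\vert U\vert)$ algorithm for ISCP, which is precisely the chain you spell out. Your explicit bookkeeping that $\vert U\vert=\vert S\vert$ and $m\leq n$ is the only content needed, and it matches the paper's (unwritten) argument.
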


\begin{theorem}
    The IRD-CVD problem can be solved in time $4^kn^{O(1)}$.
\end{theorem}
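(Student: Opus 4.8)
The final statement to prove is that the IRD-CVD problem can be solved in time $4^k n^{O(1)}$. The plan is to mirror exactly the proof structure already used for Theorem~\ref{th:1_main} (the RD-CVD case), since all the necessary machinery for the independent variant has been assembled in the preceding parts of this subsection. Specifically, I would combine the guessing procedure over subsets of the cluster vertex deletion set $S$ with the corollary stating that the \textsc{IRD-DisjointCluster} problem is solvable in time $2^{|S|}n^{O(1)}$.

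\begin{proof}
    Given an instance $(G,S,\ell)$ of the IRD-CVD problem with $|S|=k$, the algorithm iterates over every guess of an independent set $S_2\subseteq S$ (the vertices of $S$ to be assigned label $2$) and, after removing $N[S_2]$, every guess of an independent set $S_1$ from the remaining vertices of $S$ (those to be assigned label $1$). For a fixed guess with $|S_1|=i_1$ and $|S_2|=i_2$, we perform the deletions described earlier, discard the guess if it leads to the inconsistency noted above (a vertex not adjacent to $S_2$ but adjacent to $S_1$ forced to have a nonzero label incompatible with independence), and otherwise construct the instance $(\hat{G},\hat{S},f,\hat{\ell})$ of the \textsc{IRD-DisjointCluster} problem as specified, where $|\hat{S}|=k-i_1-i_2$. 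By the preceding corollary, this instance can be solved in time $2^{k-i_1-i_2}n^{O(1)}$. We return YES if and only if some guess yields a YES instance of \textsc{IRD-DisjointCluster}.

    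The correctness follows from the fact that an optimal independent Roman Dominating pair $(V_1,V_2)$ restricts to a pair $(S_1,S_2)=(V_1\cap S, V_2\cap S)$ that is among the guesses, together with the equivalence established between \textsc{IRD-DisjointCluster} and the ISCP problem; the observation on $T_2$ cliques guarantees that the reduction to \textsc{IRD-DisjointCluster} faithfully captures the label-$2$ requirement on each clique of size at least two. The running time is bounded by
    \[
    \sum_{i_1=0}^{k}\binom{k}{i_1}\sum_{i_2=0}^{k-i_1}\binom{k-i_1}{i_2}2^{\,k-i_1-i_2}\cdot n^{O(1)} = 4^k n^{O(1)},
    \]
    where the final equality uses the binomial identity $\sum_{i_1,i_2}\binom{k}{i_1}\binom{k-i_1}{i_2}2^{k-i_1-i_2}=(1+1+2)^k=4^k$.
\end{proof}

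The main point to get right is that the guessing of the two independent sets $S_1,S_2$ partitions $S$ into three roles (label $1$, label $2$, and the rest with label $0$ handled inside the cluster part), which is precisely what makes the trinomial-style sum collapse to $4^k$ rather than $3^k$: the factor $2^{k-i_1-i_2}$ from solving the disjoint-cluster instance combines with the two binomial coefficients to give $(1+1+2)^k$. I expect no substantial obstacle here, as it is a direct transcription of the RD-CVD argument; the only subtlety worth flagging is verifying that the independence constraints (requiring $S_1,S_2$ to be independent and checking the $N(S_1)$ consistency) are enforced during the guessing phase so that the reduced instance genuinely encodes an \emph{independent} Roman Dominating function.
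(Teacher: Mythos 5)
Your proposal is correct and follows essentially the same route as the paper, which simply states that the proof is analogous to the RD-CVD case (Theorem~\ref{th:1_main}): guess $S_1,S_2\subseteq S$, reduce to \textsc{IRD-DisjointCluster}, solve each reduced instance in $2^{k-i_1-i_2}n^{O(1)}$ time, and sum via the multinomial identity to obtain $4^k n^{O(1)}$. Your additional care about enforcing independence of $S_1,S_2$ and the consistency check during guessing is a welcome (and correct) elaboration of what the paper leaves implicit.
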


\begin{proof}
   The proof is analogous to the proof of Theorem \ref{th:1_main}.
\end{proof}

\section{Lower Bounds}\label{sec:LB}

In this section, we propose a lower bound on the time-complexity of the RD-CVD problem. We also show that the RD-CVD (resp. RD-VC) problem does not admit a polynomial kernel (recall that RD-VC is the Roman Domination problem parameterized by vertex cover number).

First, we provide the lower bound for the RD-CVD problem. Below, we state a necessary result from the existing literature (refer to Theorem 1.1 in \cite{10.1145/2925416}).

\begin{theorem}\label{th:LB1}\cite{10.1145/2925416}
    The following statement is equivalent to SETH: 
    For every $\epsilon<1$, there exists $d\in \mathbb{Z^+}$, such that the $d$-HITTING SET problem for set systems over $[n]$ can not be solved in time $O(2^{\epsilon n})$.
\end{theorem}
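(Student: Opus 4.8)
The statement is a biconditional between SETH and the $d$-\textsc{Hitting Set} time lower bound, so the plan is to establish the two implications separately, in each case by a reduction that keeps the governing size parameter (the number of variables on the SAT side, the universe size $n$ on the hitting set side) tight up to a $(1+o(1))$ factor. It is convenient to first rewrite SETH in its standard equivalent ``for every $\delta<1$ there is a $q$ such that $q$-CNFSAT has no $O(2^{\delta n})$ algorithm'' form, which matches the shape of the target statement. Note that $d$-\textsc{Hitting Set} over $[n]$ is trivially solvable in $2^n n^{O(1)}$ time by brute force over all subsets of the universe, so the entire content lies in the matching lower bound on the exponent.

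For the direction $\neg\text{SETH}\Rightarrow \neg(\text{HS})$, I would show that a fast SAT algorithm for all clause widths yields a fast hitting set algorithm for all $d$. Encode a $d$-\textsc{Hitting Set} instance as a width-$d$ CNF: introduce one Boolean variable $y_i$ per universe element $i\in[n]$ with the reading ``$i$ is chosen'', and for each set $S_j$ add the clause $\bigvee_{i\in S_j} y_i$, which has at most $d$ literals. A hitting set corresponds exactly to a satisfying assignment; the only extra ingredient is the size budget, which I would enforce either by a threshold gadget adding only $o(n)$ auxiliary variables or by passing to the minimization (set cover) form. Running the assumed $O(2^{\delta n})$ SAT algorithm then decides the hitting set instance in $O(2^{\delta n})$, refuting (HS) with $\epsilon=\delta$.

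For the direction $\text{SETH}\Rightarrow(\text{HS})$, fix any $\epsilon<1$ and pick $\delta$ with $\epsilon<\delta<1$. SETH supplies a width $q$ such that $q$-CNFSAT has no $O(2^{\delta n})$ algorithm. I would reduce $q$-CNFSAT to $d$-\textsc{Hitting Set} for some $d=d(q)$ over a universe of size $(1+o(1))n$: apply the Sparsification Lemma of \cite{DBLP:journals/jcss/ImpagliazzoPZ01} to assume a linear number of clauses, use the complement trick so that a hitting set $H$ encodes the truth assignment via its complement (avoiding the fatal factor-$2$ blow-up that a naive ``two elements per variable'' encoding would incur), and use the budget together with small gadgets to force a consistent assignment. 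A hypothetical $O(2^{\epsilon n})$ algorithm for $d$-\textsc{Hitting Set} would then solve $q$-CNFSAT in $O(2^{\epsilon n})\subseteq O(2^{\delta n})$, a contradiction.

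The main obstacle throughout is exponent fidelity: the reductions must be engineered so that the universe size is $n(1+o(1))$ rather than merely $O(n)$, since any constant factor above one in the exponent would make the two lower-bound thresholds incomparable and break the equivalence. Closely related are the two technical nuisances of encoding the cardinality budget inside a bounded-arity formula without spending extra variables, and of correctly threading the nested quantifiers relating $\epsilon$, $\delta$, $q$, and $d$. This is precisely the engineering carried out by Cygan et al.\ \cite{10.1145/2925416}, whose Theorem~1.1 we invoke.
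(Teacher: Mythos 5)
First, note that the paper does not prove this statement at all: it is imported verbatim as Theorem~1.1 of Cygan et al.~\cite{10.1145/2925416}, and the surrounding text explicitly flags it as ``a necessary result from the existing literature''. So there is no in-paper argument to compare yours against; the only fair comparison is with the cited proof itself, and citing the result without reproving it (as the paper does) is the defensible move.

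Read as a proof, your proposal has genuine gaps at exactly the two places where all the difficulty lives, and it ends by invoking the very theorem it set out to prove, which is circular. Concretely: (i) in the direction $\neg\mathrm{SETH}\Rightarrow\neg(\mathrm{HS})$, the clause encoding $\bigvee_{i\in S_j} y_i$ is monotone and hence always satisfiable, so the entire content is the cardinality budget; your claim that it can be enforced by a threshold gadget adding only $o(n)$ auxiliary variables inside a bounded-width CNF is asserted without construction and is not obviously true (standard cardinality encodings such as sequential counters or totalizers introduce $\Omega(n)$ auxiliary variables, which already inflates the exponent by a constant factor greater than one and is fatal when the assumed SAT exponent $\delta$ is close to $1$), and ``passing to the minimization form'' does not explain how a decision oracle for SAT performs the minimization. (ii) In the direction $\mathrm{SETH}\Rightarrow(\mathrm{HS})$, you correctly diagnose that a naive two-elements-per-variable encoding loses a factor $2$ in the exponent, but the ``complement trick'' plus ``small gadgets to force a consistent assignment'' is precisely the nontrivial gadgeteering of the cited paper (which in fact establishes the equivalence through a chain of problem-to-problem reductions with careful grouping of variables), and none of it is carried out here. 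Your quantifier bookkeeping ($\epsilon<\delta<1$, $d=d(q)$, universe size $(1+o(1))n$) is correct and the overall architecture is the right one, but as written this is a roadmap of the citation rather than a proof.
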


Now, we show a reduction from the $d$-HITTING SET problem to the RD-CVD problem to show a similar lower bound like Theorem \ref{th:LB1} for the RD-CVD problem.

\begin{theorem}\label{th:LB2}
    There exists a polynomial time algorithm that takes an instance $(U,F,t)$ of the $d$-HITTING SET problem and outputs an instance $(G,2t)$ of the RD-CVD problem (and the RD-VC problem), where $G$ has a cluster vertex deletion set (and vertex cover) of size $\vert U\vert$; and $(U,F)$ has a $d$-hitting set of size at most $t$ if and only if $G$ has a Roman Dominating function of size at most $2t$.
\end{theorem}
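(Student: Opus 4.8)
The plan is to design a gadget that, for each element of the universe $U$, creates a vertex whose ``covering'' role in the hitting set corresponds exactly to assigning label $2$ in a Roman Dominating function, and to force each set $S_i \in F$ to be hit by ensuring some element-vertex adjacent to it must carry label $2$. Concretely, I would let the vertex set of $G$ consist of one vertex $x_u$ for each element $u \in U$ (these will collectively form the cluster vertex deletion set / vertex cover, of size $|U|$), together with a gadget attached to each set $S_i$. The intended correspondence is: choosing $u$ in the hitting set $\leftrightarrow$ setting $f(x_u) = 2$. Since a hitting set of size $t$ should map to an RDF of weight exactly $2t$, each selected element must contribute weight $2$, which is precisely the cost of a label-$2$ vertex; this explains the target weight $2t$ in the statement.

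First I would make the element-vertices $\{x_u : u \in U\}$ pairwise nonadjacent (or arrange them so that after their deletion the remainder is a disjoint union of cliques), so that $\{x_u\}$ is simultaneously a vertex cover and a cluster vertex deletion set of size $|U|$ — this single construction then witnesses both the RD-CVD and RD-VC claims at once. For each set $S_i = \{u_{i_1}, \dots, u_{i_r}\}$, I would attach a gadget consisting of at least two fresh ``forcing'' vertices that are adjacent to all of $x_{u_{i_1}}, \dots, x_{u_{i_r}}$ but whose only route to being dominated cheaply is for one of those element-vertices to receive label $2$. The reason for using \emph{two} fresh vertices per set (rather than one) is to rule out the degenerate Roman option of placing label $1$ on the single gadget vertex, or placing label $2$ inside the gadget itself: with two private degree-small vertices, dominating both by internal labels costs more than $2$, so the optimal strategy is forced to instead ``pay'' by labelling some adjacent $x_u$ with $2$, which is exactly hitting $S_i$. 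I would keep each gadget small and make the gadget vertices form cliques among themselves so that deleting the $x_u$'s leaves a cluster graph.

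The two directions of the equivalence then go as follows. For the forward direction, given a $d$-hitting set $H$ with $|H| \le t$, I set $f(x_u) = 2$ for $u \in H$ and $f = 0$ elsewhere; since $H$ hits every $S_i$, every gadget vertex has a label-$2$ neighbour among the element-vertices, and every element-vertex itself is either labelled $2$ or (if made adjacent to a labelled vertex, or handled by the gadget design) dominated, giving an RDF of weight $2|H| \le 2t$. For the converse, given an RDF $f$ of weight at most $2t$, I would first invoke a normalization argument — analogous in spirit to the $T_2$-clique observation already proved in the excerpt — to push any label-$1$ or internal label-$2$ assignments in the gadgets onto the element-vertices without increasing the weight, so that $H := \{u : f(x_u) = 2\}$ is well-defined; the forcing property of each gadget then guarantees $H$ hits every $S_i$, and the weight bound yields $2|H| \le 2t$, i.e. $|H| \le t$.

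The main obstacle I anticipate is the converse normalization: ensuring that \emph{every} minimum-weight RDF can be massaged into the canonical ``label-$2$ only on element-vertices'' form without increasing weight, and without creating a solution that secretly exploits an internal gadget labelling to dominate a set $S_i$ that is not actually hit. Getting the gadget exactly right — enough private low-degree vertices so that covering them internally strictly exceeds the cost $2$ of an external label-$2$, yet few enough that the gadgets still form a cluster graph after deleting $\{x_u\}$ and that $\Delta = d$ bounds are respected so the reduction is polynomial and preserves the SETH-style parameter — is the delicate point. I would verify the gadget's behaviour by a short case analysis on the possible labels within a single gadget, showing label $2$ on an adjacent element-vertex is always at least as good as any internal assignment, which simultaneously proves the weight accounting ($2t$) and the cluster/vertex-cover size bound ($|U|$).
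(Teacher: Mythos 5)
Your overall strategy is the paper's: one vertex per universe element forming the deletion set, two ``forcing'' copies per set so that label $2$ on an adjacent element-vertex is the only economical option, the correspondence $u\in H \leftrightarrow f(x_u)=2$, and a weight-preserving normalization for the converse. However, as specified your construction has a genuine gap in the \emph{forward} direction. You commit to making the element-vertices $\{x_u\}$ pairwise nonadjacent; then for an element $u$ outside the hitting set, $f(x_u)=0$ and all of $x_u$'s neighbours are gadget vertices with label $0$, so $x_u$ is not Roman-dominated and the intended weight-$2|H|$ assignment is not an RDF. Your parenthetical ``or handled by the gadget design'' acknowledges this but does not resolve it. The paper's fix is to make $U$ a clique: then every unselected element-vertex is dominated by any selected one, and $U$ remains both a cluster vertex deletion set and a vertex cover. (You would also need to handle the degenerate case $H=\emptyset$, which only occurs when $F=\emptyset$.) A second, smaller flaw: you propose making the gadget vertices for each set form cliques among themselves; any edge inside such a gadget is not covered by $\{x_u\}$, so $U$ would no longer be a vertex cover and the RD-VC half of the statement fails. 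The paper instead keeps the two set-copies $F_1\cup F_2$ as an independent set, making $G$ a split graph, which yields both the CVD and VC bounds simultaneously.

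On the converse, which you flag as the main obstacle: the precise reason the two-copy gadget works is worth making explicit. If $f(s^1_j)=1$ and every neighbour of $s^1_j$ has label $0$, then the twin $s^2_j$, having the same neighbourhood, must also carry a nonzero label; so a total weight of at least $2$ sits on the pair and can be moved to a common element-neighbour as a single label $2$ without increasing the weight. Note the cost comparison is ``at least $2$,'' not ``more than $2$'' as you claim --- two label-$1$ gadget vertices cost exactly $2$ --- but non-strict inequality suffices for the normalization, since you only need $w(f')\le w(f)$ at each step and a strictly decreasing potential (the number of nonzero labels inside the gadgets) to terminate. With the clique on $U$ and independent gadget copies in place, the rest of your argument goes through and coincides with the paper's proof.
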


\begin{proof}
    Consider a $d$-HITTING SET instance $(U,F,t)$, where $U=\{u_1,\ldots,u_n\}$ and $F=\{S_1,\ldots,S_m\}$. We construct a graph $G=(V,E)$ as follows:
    \begin{itemize}
        \item $V=U\cup F_1\cup F_2$, where the vertices of $U$ correspond to elements of the universe $U$ and vertices in $F_1$ and $F_2$ correspond to sets in $F$. $U=\{u_1,\ldots,u_n\}$ and $F_i=\{s^i_j~\vert~S_j\in F\}$, for $i\in \{1,2\}$.
        \item $E=E_1\cup E_2\cup E_3$, where $E_1=\{u_iu_j~\vert~i,j\in [n],~i\neq j\}$, $E_2=\{u_is^1_j~\vert~u_i\in S_j\}$ and $E_2=\{u_is^2_j~\vert~u_i\in S_j\}$.
    \end{itemize}

    Note that $G$ is a split graph, where $U$ is a clique, and $F_1\cup F_2$ is the independent set. $U$ is a cluster vertex deletion set and a vertex cover of $G$.

    Let $(U,F,t)$ be a YES instance, which implies that there exists $S\subseteq U$, such that $S\cap S_i\neq \emptyset$, for every $i\in [m]$, and $\vert S\vert\leq t$. Now, we define a function on $V$ as follows: $f:V\to \{0,1,2\}$, where $f(v)=2$ for $v\in S$ and $f(v)=0$, otherwise. Now, since $S\subseteq U$, $f(v)=0$ for every $v\in F_1\cup F_2$. For every $s^i_j\in F_i$, there must exist $u_k\in S$ which is adjacent to $s^i_j$, since $S$ must has an element $u_k$ which hits the set $S_j$. For every $v\in U\setminus S$, with $f(v)=0$ is adjacent to some vertex with $f$-value $2$, as $G[U]$ is a clique and $S$ is nonempty. Hence, we can conclude that $f$ is an RDF with weight at most $2\vert S\vert\leq 2t$.

    Conversely, let $f$ be an RDF with $w(f)\leq 2t$. First, we prove the following claim.

    \begin{claim}\label{Claim_LB}
        There exists an RDF $g$ on $V$ with weight at most $w(f)$, which satisfies the property: $g(v)=0$, for all $v\in F_1\cup F_2$.
    \end{claim}

    \begin{claimproof}
        Let there exist $v\in F_1\cup F_2$, such that $f(v)> 0$. Two cases may appear:
        
        \noindent \textbf{Case 1}: $f(v)=1$

        Without loss of generality, let $v\in F_1$, hence $v=s^1_j$ for some $j\in [m]$. If $s^1_j$ has a neighbour $v'$ which has non zero label, then define a function $f':V\to \{0,1,2\}$ as follows: $f'(v')=2$ and $f'(s^1_j)=0$ and $f'(u)=f(u)$ for every other $u\in V$. Note that $f'$ is an RDF and $w(f')\leq w(f)$.
        
        Now, if every neighbour of $s^1_j$ has label $0$ under $f$, then observe that $f(s^2_j)>0$. Hence, we define a function $f':V\to \{0,1,2\}$ as follows: $f'(v')=2$, $f'(s^1_j)=f'(s^2_j)=0$, where $v'$ is a neigbour of $s^1_j$ and $f'(u)=f(u)$ for every other $u\in V$. Note that $f'$ is an RDF and $w(f')\leq w(f)$.

        \noindent \textbf{Case 2}: $f(v)=2$

        In this case, we define a function $f':V\to \{0,1,2\}$ as follows: $f'(v')=2$, $f'(v)=0$, where $v'$ is a neigbour of $v$ and $f'(u)=f(u)$ for every other $u\in V$. Note that $f'$ is an RDF and $w(f')\leq w(f)$.
        
        Hence, in both cases, we can define another RDF $f'$ of the same or less weight than $f$, such that $\vert (V^{f'}_1\cup V^{f'}_2)\cap (F_1\cup F_2)\vert< \vert (V^{f}_1\cup V^{f}_2)\cap (F_1\cup F_2)\vert$. Hence, applying this technique iteratively, we get an RDF $g$, with $w(g)\leq w(f)$ and $\vert (V^{g}_1\cup V^{g}_2)\cap (F_1\cup F_2)\vert=0$. Hence, the claim follows. 
    \end{claimproof}

    Hence, by Claim \ref{Claim_LB}, we consider an RDF $f$ on $V$, such that $f(v)=0$, for every $v\in F_1\cup F_2$ and $w(f)\leq 2t$. We define $S=\{v\in U~\vert~f(v)=2\}$. Note that $\vert S\vert\leq \frac{2t}{2}=t$ and every vertex of $F_1\cup F_2$ is adjacent to some vertex of $S$. This implies that $S$ is a hitting set of $(U, F)$ of cardinality at most $t$. Hence, the theorem follows.
\end{proof}

Hence, by Theorem \ref{th:LB1} and \ref{th:LB2}, we can prove the following theorem.

\begin{theorem}\label{th:LB3}
    The RD-CVD (and RD-VC) problem can not be solved in time $2^{\epsilon k}n^{O(1)}$, for any $0<\epsilon<1$, unless SETH fails. \end{theorem}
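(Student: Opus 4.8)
The plan is to argue by contraposition, chaining the reduction of Theorem \ref{th:LB2} with a hypothetical fast algorithm and invoking the SETH characterization of Theorem \ref{th:LB1}. Concretely, I would assume that RD-CVD admits an algorithm running in time $2^{\epsilon k}n^{O(1)}$ for some fixed $\epsilon$ with $0<\epsilon<1$, and derive that SETH fails. Since both ingredients are already available, the only real work is careful bookkeeping of the parameter and the running time.

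First I would fix a value $\epsilon'$ with $\epsilon<\epsilon'<1$. By Theorem \ref{th:LB1}, the failure of SETH is equivalent to the existence, for this $\epsilon'$, of some $d\in\mathbb{Z}^+$ for which $d$-HITTING SET over a universe $[n]$ can be solved in time $O(2^{\epsilon' n})$. Hence it suffices to produce such an algorithm for every $d$, using the assumed RD-CVD algorithm. Given an arbitrary instance $(U,F,t)$ of $d$-HITTING SET with $|U|=n$, I would apply the polynomial-time reduction of Theorem \ref{th:LB2} to obtain an RD-CVD instance $(G,2t)$ together with a cluster vertex deletion set of size $k=|U|=n$; by that theorem, $(U,F)$ has a hitting set of size at most $t$ if and only if $G$ has an RDF of weight at most $2t$. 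Running the assumed algorithm on $(G,2t)$ then decides the hitting-set instance in time $2^{\epsilon k}|V(G)|^{O(1)}=2^{\epsilon n}|V(G)|^{O(1)}$.

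The point that makes this fast enough is that $d$ is a fixed constant. Because every set in $F$ has size at most $d$, there are at most $\sum_{i=0}^{d}\binom{n}{i}=O(n^d)$ distinct sets, so without loss of generality $m=|F|=O(n^d)$, and therefore $|V(G)|=n+2m=O(n^d)$ is polynomial in $n$. Consequently $|V(G)|^{O(1)}=n^{O(1)}$, and since $n^{O(1)}=2^{o(n)}=O(2^{(\epsilon'-\epsilon)n})$, the total running time is $2^{\epsilon n}n^{O(1)}=O(2^{\epsilon' n})$. This gives an $O(2^{\epsilon' n})$ algorithm for $d$-HITTING SET, which by Theorem \ref{th:LB1} contradicts SETH. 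Since the reduction of Theorem \ref{th:LB2} simultaneously produces a vertex cover of size $|U|$, the identical argument rules out a $2^{\epsilon k}n^{O(1)}$ algorithm for RD-VC as well.

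The step I expect to require the most care is the running-time estimate: one must ensure that the polynomial overhead of the reduction, measured in $|V(G)|$ rather than in the universe size $n$, is absorbed into the slack $\epsilon'-\epsilon$. This is exactly where the constancy of $d$ is essential, as it guarantees that $|V(G)|$ is polynomial in $n$; were $m$ allowed to be exponential in $n$, the factor $|V(G)|^{O(1)}$ would no longer be subsumed by $2^{(\epsilon'-\epsilon)n}$ and the argument would fail.
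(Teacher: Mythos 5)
Your proposal is correct and follows essentially the same route as the paper's own proof: assume a $2^{\epsilon k}n^{O(1)}$ algorithm, compose it with the reduction of Theorem \ref{th:LB2} (which sets $k=\vert U\vert$), and contradict the SETH characterization of Theorem \ref{th:LB1}. The only difference is that you spell out the bookkeeping the paper leaves implicit --- bounding $m=O(n^d)$ so that $\vert V(G)\vert^{O(1)}$ is absorbed into the slack $\epsilon'-\epsilon$ --- which is a welcome clarification but not a different argument.
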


\begin{proof}
    Let the RD-CVD (or RD-VC) problem be solved in time $2^{\epsilon k}n^{O(1)}$. Then by Theorem \ref{th:LB2}, we can solve the $d$-HITTING SET problem with $\vert U\vert=k$ in $2^{\epsilon k}n^{O(1)}$ time. Hence, by Theorem \ref{th:LB1}, this contradicts the SETH. Hence, the RD-CVD (and RD-VC) problem can not be solved in time $2^{\epsilon k}n^{O(1)}$, unless the SETH fails.
\end{proof}

We state another theorem to show that the RD-CVD (and RD-VC) problem is unlikely to admit a polynomial kernel.

\begin{theorem}\cite{10.1145/2650261}\label{th:LB4}
    The $d$-HITTING SET problem parameterized by the universe size does not admit any polynomial kernel unless NP $\subseteq$ coNP$/$poly.
\end{theorem}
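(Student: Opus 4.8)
Since Theorem~\ref{th:LB4} is an incompressibility statement, the natural route is a composition argument (the machinery behind the cited incompressibility literature), not a polynomial parameter transformation, because a PPT only transfers hardness between problems rather than producing it. The plan is therefore to rule out a polynomial kernel, under the assumption $\mathrm{NP}\not\subseteq\mathrm{coNP}/\mathrm{poly}$, by exhibiting an OR-cross-composition: an algorithm that takes $t$ instances of an NP-hard problem and outputs a single $d$-\textsc{Hitting Set} instance that is a YES-instance if and only if at least one input is, while the output \emph{universe} size stays bounded by a polynomial in the largest input size and in $\log t$. A preliminary observation I would make explicit is that for a \emph{constant} set-size bound the problem already admits a trivial $O(|U|^{d})$-set kernel (discard duplicate and superset constraints), so the incompressibility can hold only in the regime where the bound $d$ is part of the input, i.e.\ essentially general \textsc{Hitting Set} parameterized by $|U|$; that is the regime I would target.

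The key steps, in order, would be: (i) choose a convenient NP-hard source (\textsc{Hitting Set} itself is fine) and normalize the $t$ instances so that they share a common ground set of size $n$ and a common budget, grouping instances into batches that agree on the relevant size parameters by standard padding; (ii) form the combined ground set from one copy of the $n$ common elements together with only $O(\log t)$ auxiliary \emph{selector} elements that encode an instance index $i\in[t]$ in binary; (iii) for each constraint of instance $i$, build a hitting-set constraint satisfiable either by an original element (as in instance $i$) or by a selector element whose bit \emph{disagrees} with the binary code of $i$, and calibrate the budget so that any hitting set of the permitted size is forced to commit to a single consistent code, thereby ``selecting'' one instance and then having to hit all of that instance's original constraints; (iv) verify the OR-behaviour and that the universe has size $n+O(\log t)=\mathrm{poly}(\max_i|x_i|+\log t)$, which certifies the cross-composition and hence that a polynomial kernel would collapse the polynomial hierarchy.

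The hard part will be step~(iii): forcing the solution to commit to exactly one instance using only logarithmically many selector elements, without inflating the universe, and getting the budget bookkeeping right so that mixing codes is never at least as cheap as committing to one. This is precisely the ``colors and IDs'' / weak-composition idea, and it is where the real verification effort lies. Finally, I would record the downstream use that motivates stating Theorem~\ref{th:LB4} here: the reduction of Theorem~\ref{th:LB2} is a polynomial-time map that turns a $d$-\textsc{Hitting Set} instance with universe $U$ into an \textsc{RD-CVD} (and \textsc{RD-VC}) instance whose cluster vertex deletion set (resp.\ vertex cover) has size exactly $|U|$, so it is a polynomial parameter transformation; since $d$-\textsc{Hitting Set} is NP-hard and \textsc{RD-CVD} lies in NP, Theorem~\ref{th:kernel_hard} combined with the present theorem yields that \textsc{RD-CVD} (and \textsc{RD-VC}) admits no polynomial kernel unless $\mathrm{NP}\subseteq\mathrm{coNP}/\mathrm{poly}$.
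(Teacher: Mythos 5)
The paper does not actually prove this statement: Theorem~\ref{th:LB4} is imported verbatim from the cited reference (Dom, Lokshtanov and Saurabh, ``Kernelization Lower Bounds Through Colors and IDs''), so there is no in-paper proof to compare against and your attempt has to be judged as a reconstruction of the external result. Two things in your proposal are right and worth keeping. The choice of technique is correct: this theorem is a \emph{source} of incompressibility rather than a transfer of it, so a (cross-)composition is what is needed, not a PPT. More importantly, your preliminary observation exposes a genuine imprecision in the statement as the paper gives it: for constant $d$, discarding duplicate constraints leaves at most $O(|U|^{d})$ sets, which is already a polynomial kernel, so the claim can only hold for \textsc{Hitting Set} with unbounded set sizes parameterized by $|U|$. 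That is indeed the version the cited paper proves, and it is the only version the present paper needs, since the reduction of Theorem~\ref{th:LB2} never uses the bound $d$; your closing remark about recovering Theorem~\ref{th:LB5} via Theorem~\ref{th:kernel_hard} is consistent with how the paper uses the result.

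However, the concrete gadget you sketch in step~(iii) does not work, and the failure is not a matter of budget bookkeeping. If every composed constraint coming from instance $i$ is augmented with the selector elements whose bits disagree with the code of $i$, then every composed constraint contains, for \emph{each} bit position $j$, one of the two selector elements of position $j$ (namely the one indexed by $1-c_i[j]$). Consequently the two-element set consisting of both selectors of a single bit position hits every composed constraint of every instance, so the composed instance is a YES-instance whenever the budget is at least $2$, and the backward direction of the OR-equivalence collapses; no calibration of a budget of the form $k+O(\log t)$ can exclude this. This is precisely why the cited proof does not attach IDs to the constraints: it works with a \emph{colored} intermediate problem in which the $k$ solution elements carry distinct colors, attaches the binary instance IDs to the candidate solution objects, and adds auxiliary constraints forcing all $k$ chosen objects to carry the same ID, after which a separate polynomial parameter transformation removes the colors. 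Your plan names the right slogan (``colors and IDs'') but omits the colors, and that is exactly the ingredient that makes the instance-selection argument sound; as written, the proposal is a correct strategy with an incorrect central gadget, not yet a proof.
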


Hence combining the above discussion with Theorem \ref{th:LB4}, the  following theorem can be concluded.

\begin{theorem}\label{th:LB5}
    The RD-CVD (and RD-VC) problem does not admit a polynomial kernel unless NP $\subseteq$ coNP$/$poly. 
\end{theorem}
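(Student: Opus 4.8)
The plan is to transfer the kernel lower bound of Theorem~\ref{th:LB4} to the RD-CVD (and RD-VC) problem by observing that the reduction of Theorem~\ref{th:LB2} is in fact a polynomial parameter transformation (PPT), and then invoking Theorem~\ref{th:kernel_hard}. Concretely, I set $P$ to be the $d$-HITTING SET problem parameterized by the universe size $\vert U\vert$ and $Q$ to be the RD-CVD problem parameterized by the CVD size $k$, and I aim to apply the transfer theorem in the direction $P \to Q$.

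First I would argue that the map $(U,F,t)\mapsto (G,2t)$ furnished by Theorem~\ref{th:LB2} satisfies the three requirements of a PPT. It is computable in polynomial time (this is exactly the content of Theorem~\ref{th:LB2}); it preserves YES/NO instances, since $(U,F)$ admits a $d$-hitting set of size at most $t$ if and only if $G$ admits an RDF of weight at most $2t$; and the output parameter is polynomially bounded in the input parameter, because the constructed graph $G$ has a cluster vertex deletion set of size exactly $\vert U\vert$, so $k' = \vert U\vert \le p(\vert U\vert)$ with $p$ the identity. Hence the reduction is a genuine PPT from $P$ to $Q$.

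Next I would verify the remaining hypotheses of Theorem~\ref{th:kernel_hard}: the classical (unparameterized) $d$-HITTING SET problem is NP-hard, and RD-CVD lies in NP, since given a candidate function $f\colon V\to\{0,1,2\}$ one can check in polynomial time both that it is an RDF and that $w(f)\le \ell$. With both preconditions in place, Theorem~\ref{th:kernel_hard} gives: if $Q$ (RD-CVD) has a polynomial kernel, then $P$ ($d$-HITTING SET) has one too. Taking the contrapositive and combining with Theorem~\ref{th:LB4}, a polynomial kernel for RD-CVD would yield a polynomial kernel for $d$-HITTING SET, which is impossible unless NP $\subseteq$ coNP$/$poly. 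The identical argument handles RD-VC, because the same graph $G$ produced by Theorem~\ref{th:LB2} has a vertex cover of size $\vert U\vert$, so the very same map is also a PPT from $d$-HITTING SET to RD-VC.

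The step I expect to require the most care is not any single computation but the clean, simultaneous verification of all preconditions of Theorem~\ref{th:kernel_hard} — in particular confirming that the parameter genuinely does not blow up (here it remains equal to $\vert U\vert$, so the bound $k'\le p(\vert U\vert)$ is immediate) and that the NP-membership of RD-CVD and the NP-hardness of classical $d$-HITTING SET are in order, so that the transfer theorem applies verbatim. Once these bookkeeping conditions are confirmed, the conclusion follows directly.
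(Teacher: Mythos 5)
Your proposal is correct and follows exactly the same route as the paper: invoke Theorem~\ref{th:LB4}, observe that the reduction of Theorem~\ref{th:LB2} is a PPT (with the parameter staying equal to $\vert U\vert$), and apply Theorem~\ref{th:kernel_hard}. You simply spell out the preconditions (NP-membership of RD-CVD, NP-hardness of classical $d$-HITTING SET) more explicitly than the paper does.
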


\begin{proof}
    By Theorem \ref{th:LB4}, the $d$-HITTING SET problem parameterized by universe size does not admit a polynomial kernel unless NP $\subseteq$ coNP$/$poly. Since the reduction provided in Theorem \ref{th:LB2} is a polynomial parameter transformation (PPT), by Theorem \ref{th:kernel_hard}, the RD-CVD and RD-VC do not admit a polynomial kernel unless NP $\subseteq$ coNP$/$poly.
\end{proof}

\section{Conclusion}\label{sec:conclusion}
\vspace*{-.3cm}

In this work, we have extended the study on the parameterized complexity of the Roman Domination problem and one of its variants. There are other interesting structural parameters, such as neighborhood diversity and cliquewidth, for which it would be interesting to determine whether the problem parameterized by these parameters is fixed parameter tractable (FPT).

Another promising research direction is to develop an algorithm to solve the RD-CVD problem with better time-complexity than $4^{k}n^{O(1)}$. Given that the lower bound on time-complexity mentioned in Theorem \ref{th:LB3}, it might be possible to achieve an improved algorithm.

\bibliography{lipics-v2021-sample-article}

\end{document}